\documentclass[11pt]{article}
\usepackage[a4paper,margin=1in]{geometry}
\usepackage{amsmath,amssymb,amsthm,mathtools,aliascnt}

\usepackage[disable]{todonotes}

\usepackage[dvipsnames]{xcolor}
\usepackage{enumitem,graphicx,adjustbox,booktabs,tocloft,array,longtable}
\usepackage{microtype}
\usepackage{tikz}
\usetikzlibrary{arrows.meta,positioning,calc,fit}
\usepackage{hyperref}
\usepackage{cleveref}
\hypersetup{colorlinks=true,linkcolor=MidnightBlue,citecolor=MidnightBlue,urlcolor=MidnightBlue,
 pdftitle={Single Exponential FPT Algorithm for Planar Directed Feedback Vertex Set},pdfauthor={Daniel Lokshtanov, Saket Saurabh, and Jie Xue}}
\definecolor{softblue}{RGB}{241,244,248}
\definecolor{softgreen}{RGB}{242,245,242}
\definecolor{softorange}{RGB}{248,246,242}
\definecolor{softgray}{RGB}{247,247,247}
\newtheorem{theorem}{Theorem}[section]
\newaliascnt{lemma}{theorem}
\newtheorem{lemma}[lemma]{Lemma}
\aliascntresetthe{lemma}
\newaliascnt{corollary}{theorem}
\newtheorem{corollary}[corollary]{Corollary}
\aliascntresetthe{corollary}
\newaliascnt{proposition}{theorem}

\aliascntresetthe{proposition}
\newaliascnt{observation}{theorem}
\newtheorem{observation}[observation]{Observation}
\aliascntresetthe{observation}
\newaliascnt{fact}{theorem}

\aliascntresetthe{fact}
\theoremstyle{definition}
\newaliascnt{definition}{theorem}
\newtheorem{definition}[definition]{Definition}
\aliascntresetthe{definition}
\newaliascnt{remark}{theorem}

\aliascntresetthe{remark}
\crefname{theorem}{Theorem}{Theorems}
\Crefname{theorem}{Theorem}{Theorems}
\crefname{lemma}{Lemma}{Lemmas}
\Crefname{lemma}{Lemma}{Lemmas}
\crefname{corollary}{Corollary}{Corollaries}
\Crefname{corollary}{Corollary}{Corollaries}
\crefname{proposition}{Proposition}{Propositions}
\Crefname{proposition}{Proposition}{Propositions}
\crefname{observation}{Observation}{Observations}
\Crefname{observation}{Observation}{Observations}
\crefname{fact}{Fact}{Facts}
\Crefname{fact}{Fact}{Facts}
\crefname{definition}{Definition}{Definitions}
\Crefname{definition}{Definition}{Definitions}
\crefname{remark}{Remark}{Remarks}
\Crefname{remark}{Remark}{Remarks}

\newcommand{\Solve}{\text{\normalfont\scshape Solve}}
\newcommand{\Exact}{\text{\normalfont\scshape Exact}}

\newcommand{\FAIL}{\text{\normalfont\scshape Fail}}
\newcommand{\wt}{w}

\setlist{itemsep=3pt,topsep=5pt}
\usepackage{mdframed,booktabs,fancyhdr}

\newcommand{\OO}{O}

\AtBeginDocument{%
  \let\originaltableofcontents\tableofcontents
  \renewcommand{\tableofcontents}{\par\noindent
    \begin{minipage}{\linewidth}\originaltableofcontents\end{minipage}\par}}

\title{
Single-Exponential Algorithms for Directed Feedback Vertex Set on Planar Digraphs\footnote{This manuscript is a preliminary draft. We
would have preferred more time to explore the ideas
and polish the presentation, but the current pace
of research has prompted us to share the results
at this stage. We will continue to develop the
results and refine the exposition in subsequent
versions.}}

\author{%
Daniel Lokshtanov\thanks{%
University of California, Santa Barbara, USA.
Email: \texttt{daniello@ucsb.edu}.}
\and
Saket Saurabh\thanks{%
The Institute of Mathematical Sciences, HBNI, Chennai, India;
and University of Bergen, Norway.
Email: \texttt{saket@imsc.res.in}.}
\and
Jie Xue\thanks{%
New York University Shanghai, China.
Email: \texttt{jiexue@nyu.edu}.}
}
\date{}
\begin{document}
\maketitle

\begin{abstract}
We consider \textsc{Directed Feedback Vertex Set} on planar digraphs,
parameterized by the solution size $k$. We give a randomized algorithm
with one-sided error running in time
$(2+\sqrt5)^k n^{\OO(1)}= 4.24^k n^{\OO(1)}$, and a deterministic
algorithm running in time $8.04^k n^{\OO(1)}$. Both algorithms use
polynomial space. To the best of our knowledge, these are the first
single-exponential fixed-parameter algorithms for \textsc{Directed
Feedback Vertex Set} on planar digraphs. This contrasts with general
digraphs, where the best known algorithms run in time
$2^{\OO(k\log k)}(n+m)$, and whether a $2^{o(k\log k)}n^{\OO(1)}$-time
algorithm exists remains a major open problem. Our main tool is an
exact Euler-type counting identity for plane digraphs. 
It shows that every small solution must carry a large share of the vertices whose in- and out-arcs alternate in the embedding, while solutions avoiding such vertices can be computed by reducing to \textsc{Directed Feedback Arc Set}, which is known to be solvable in polynomial time on planar digraphs via the Lucchesi-Younger theorem.
%\todo[inline]{by reduction to FAS on planar digraphs}
\end{abstract}

%\newpage

%\section{Introduction}\label{sec:introduction}
%
% intro_v4.tex -- version 4 (face-parameter result removed)
\section{Introduction}
\label{sec:introduction}
Deleting a small number of vertices to eliminate all cycles
is a fundamental algorithmic problem. In
\textsc{Feedback Vertex Set} (\textsc{FVS}), the input is
an undirected graph $G$ and an integer $k$, and the task is
to find at most $k$ vertices whose deletion makes $G$ acyclic.
Its directed counterpart,
\textsc{Directed Feedback Vertex Set} (\textsc{DFVS}),
asks for at most $k$ vertices that intersect every directed
cycle of a given digraph. Although the two problems have
similar formulations, their parameterized complexity has
developed very differently.

The parameterized study of \textsc{FVS} predates the formal
development of parameterized complexity. Mehlhorn's 1984
monograph already contained a fixed-parameter algorithm
for the undirected problem~\cite{Mehlhorn84,ChitnisEtAl15}.
Since then, \textsc{FVS} has been a central problem in the
development of parameterized algorithms, with techniques
such as randomized branching, iterative compression, and
Cut\& Count contributing to its study 
~\cite{BeckerEtAl00,GuoEtAl06,CyganEtAl11}.
Successive improvements have led to a deterministic
algorithm running in time
$\OO^*(3.460^k)$~\cite{IwataKobayashi19}
and a randomized algorithm running in time
$\OO^*(2.7^k)$~\cite{LiNederlof20}.

% The parameterized study of \textsc{Feedback Vertex Set} predates the
% formal development of parameterized complexity.  Already in 1984,
% Mehlhorn's monograph contained a fixed-parameter algorithm for the
% undirected problem~\cite{Mehlhorn84}; this is commonly credited as the
% first FPT algorithm for \textsc{Feedback Vertex Set}~\cite{ChitnisEtAl15}.
% Since then, undirected \textsc{Feedback Vertex Set} (\textsc{FVS}) has become one
% of the central testbeds for parameterized algorithms.  A long sequence
% of works introduced and refined techniques such as randomized
% branching, iterative compression, and Cut\&Count in the context of this
% problem~\cite{BeckerEtAl00,GuoEtAl06,CyganEtAl11}.  This line of work has
% led to steadily improving dependence on the solution size $k$.  The
% current best deterministic algorithm runs in time
% $\OO^*(3.460^k)$~\cite{IwataKobayashi19}, whereas the current best
% randomized algorithm runs in time $\OO^*(2.7^k)$~\cite{LiNederlof20}.

For \textsc{DFVS}, progress has been slower. Its
fixed-parameter tractability was a major open problem
for nearly two decades, despite algorithms for special
cases such as tournaments~\cite{RamanSaurabh03}.
Chen, Liu, Lu, O'Sullivan, and Razgon settled this question
by giving an algorithm with running time
$\OO(4^k k!\,n^{\OO(1)})$~\cite{ChenEtAl08}.
Lokshtanov, Ramanujan, and Saurabh subsequently obtained
a linear dependence on the input size, with running time
$\OO(k!\,4^k k^5(n+m))$~\cite{LokshtanovRamanujanSaurabh18}.
Xiong and Xiao further improved this bound to
$\OO(k!\,2^{o(k)}(n+m))$~\cite{XiongXiao25}.
These improvements retain a factorial dependence on $k$,
leaving the central question of whether \textsc{DFVS}
admits an algorithm running in time $c^k n^{\OO(1)}$
for some constant $c$.

Planarity is a natural setting in which to seek such an
algorithm. It already helps when \textsc{DFVS} is
parameterized by the treewidth $t$ of the underlying
undirected graph: Bonamy et al.~\cite{BonamyEtAl18}
gave an algorithm running in time
$2^{\OO(t)}n^{\OO(1)}$ on planar digraphs, whereas a
dependence of $2^{o(t\log t)}$ is impossible on general
digraphs under the Exponential Time Hypothesis.
However, this does not yield a single-exponential
algorithm parameterized by the solution size $k$.
Whether such an algorithm exists remained open even
for planar digraphs.

A second long-standing question concerns polynomial preprocessing.
Whether \textsc{DFVS} admits a polynomial kernel parameterized by
the solution size remains open on general
digraphs~\cite{BergougnouxEtAl21}. Until very recently, this was
open even on planar digraphs. Sheng and Xiao recently gave a
deterministic kernel for \textsc{Planar DFVS} with
$\OO(k^{66}\log^2 k)$ vertices and arcs~\cite{ShengXiao26}.

A subtlety is that their kernelization may increase the solution
budget: the output parameter $k'$ is bounded by
$\OO(k^{66}\log^2 k)$, rather than by $k$.
If the kernel preserved the budget, enumerating all subsets
of at most $k$ vertices in the reduced instance would give
running time $2^{\OO(k\log k)}n^{\OO(1)}$.
Here, however, we must allow subsets of size up to $k'$,
so this argument does not apply. Naively enumerating all
vertex subsets gives only
$2^{\OO(k^{66}\log^2 k)}n^{\OO(1)}$ time.
Using the treewidth algorithm of~\cite{BonamyEtAl18}
improves this to $2^{\OO(k^{33}\log k)}n^{\OO(1)}$,
since a planar graph on $N$ vertices has treewidth
$\OO(\sqrt N)$. Thus, although the kernel resolves the
polynomial-kernel question on planar digraphs, it does not
by itself resolve the single-exponential running-time question.

% A second long-standing question concerns polynomial preprocessing.
% Whether \textsc{DFVS} admits a polynomial kernel parameterized by
% the solution size remains open on general
% digraphs~\cite{BergougnouxEtAl21}. Until very recently, this was
% open even on planar digraphs. Sheng and Xiao recently gave a
% deterministic kernel for \textsc{Planar DFVS} with
% $\OO(k^{66}\log^2 k)$ vertices and arcs~\cite{ShengXiao26}.
% Their kernelization may increase the solution budget: the output
% parameter $k'$ has the same bound $\OO(k^{66}\log^2 k)$ as the
% output size, and is not guaranteed to be at most $k$.
% Combined with the treewidth algorithm of~\cite{BonamyEtAl18},
% the kernel gives running time
% $2^{\OO(k^{33}\log k)}n^{\OO(1)}$, since a planar graph on
% $N$ vertices has treewidth $\OO(\sqrt N)$.
% Thus, although this resolves the polynomial-kernel question
% on planar digraphs, the single-exponential running-time
% question remained unresolved.

\paragraph{Our results.}
We resolve the latter question.  Our main result is the first
single-exponential fixed-parameter algorithm for \textsc{DFVS} on planar
digraphs, parameterized by the solution size $k$.

\begin{theorem}[main theorem]
\label{thm:main-intro}
\textsc{Planar Directed Feedback Vertex Set} admits a randomized algorithm with running time
\[
    (2+\sqrt5)^k n^{\OO(1)}
    \leq 4.2361^k n^{\OO(1)},
\]
and a deterministic algorithm with running time
\[
    8.030^k n^{\OO(1)}.
\]
Both algorithms use polynomial space.
\end{theorem}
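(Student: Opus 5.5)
The plan is to reduce to weighted \textsc{Directed Feedback Arc Set} on planar digraphs, which is polynomial-time solvable by the Lucchesi--Younger theorem, after first guessing how the solution meets a small set of ``hard'' vertices. Fix a plane embedding of $G$. For a vertex $v$ let $s(v)$ be the number of direction changes (in-arc followed by out-arc or vice versa) in the cyclic rotation at $v$, and for a face $f$ let $s(f)$ be the number of direction changes along its boundary walk. Each corner of the embedding is a direction change either at its vertex or on its face, and Euler's formula then gives an exact identity of the form
\[
\sum_{v}\bigl(s(v)-2\bigr)+\sum_{f}\bigl(s(f)-2\bigr)=-4 .
\]
Call $v$ \emph{alternating} if $s(v)\ge 4$. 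Vertices with $s(v)\le 2$ are easy to handle. Their in-arcs form one contiguous block of the rotation and their out-arcs another. So $v$ can be split into $v^{-}\to v^{+}$, with all in-arcs attached to $v^{-}$ and all out-arcs to $v^{+}$, and planarity is preserved. Deleting $v$ then corresponds to deleting the arc $v^{-}v^{+}$.

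\textbf{Step 1 (easy case).} Suppose we must find a solution avoiding a prescribed set $U$ of vertices, and every vertex outside $U$ is non-alternating. Split every non-alternating vertex as above and give the split arc weight $1$. Give every original arc, and every split arc of a vertex in $U$, weight $k+1$; vertices of $U$ that are alternating are kept unsplit. A feedback arc set of weight at most $k$ in this planar digraph corresponds exactly to a DFVS of size at most $k$ avoiding $U$. It can be computed in polynomial time by the weighted Lucchesi--Younger min--max relation, applied to the planar dual, where directed cycles correspond to dicuts.

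\textbf{Step 2 (counting lemma).} Next I would show that every solution $S$ with $|S|\le k$ must contain a large share of the alternating vertices, measured with weight $s(v)/2-1$. After deleting $S$, the graph is acyclic. In an acyclic plane digraph every face has $s(f)\ge 2$, since a face boundary with no direction change would be a directed cycle. Plugging this into the identity for $G-S$ gives $\sum_{v\notin S}(s_{G-S}(v)-2)\le -4$ minus contributions that are controlled by the faces. Deleting a vertex $v$ merges its incident faces. I would then track how $s$ of the neighbouring vertices and of the merged face changes, and aim for a bound of the form
\[
\sum_{v\notin S,\ v \text{ alternating}}\bigl(s(v)/2-1\bigr)\le c\cdot\sum_{v\in S}\bigl(s(v)/2-1\bigr)+O(|S|).
\]
This local bookkeeping under vertex deletion is the step I expect to be the main obstacle. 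The constant $c$ it produces is exactly what determines the base of the exponential.

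\textbf{Step 3 (algorithm).} The randomized algorithm repeatedly picks an alternating vertex $v\notin U$ at random with probability proportional to its weight. With a suitable probability it deletes $v$ and decreases $k$; otherwise it adds $v$ to $U$. Once no alternating vertex lies outside $U$, it calls Step 1. By Step 2, the probability that every guess agrees with a fixed optimal solution is at least $\alpha^{-k}$. I would tune the branching probabilities, with a potential that combines $k$ and the remaining alternating weight, so that the resulting recurrence has the form $T(k)\le T(k-1)+\ldots$ with solution $\varphi^{3k}=(2+\sqrt5)^k$. Repeating $(2+\sqrt5)^k$ times gives constant success probability with one-sided error, and polynomial space.

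For the deterministic $8.030^k$ bound, I would replace random sampling by explicit branching on an alternating vertex of large weight: either it is in the solution, or it is added to $U$. The measure would be $k$ plus a scaled alternating weight outside $U$, which the counting lemma bounds by $O(k)$ while the instance is still feasible. Solving the resulting branching recurrence numerically should yield the stated constant.
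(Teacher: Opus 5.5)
Your Step~1 is the paper's splitting reduction (Lemma~\ref{lem:protected}), and your identity is Lemma~\ref{lem-diface} with $s(v)=2b(v)$ and $s(f)=2a(f)$.

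\textbf{The counting lemma.} The genuine gap is Step~2. The whole theorem rests on it, and you leave it as an aim. Working in $G-S$ is also the wrong place to look. Deleting $S$ removes arcs at the neighbours of $S$, and each removal can lower some $s(v)$ by two. The number of such arcs is controlled neither by $|S|$ nor by the weight of $S$, since a bimodal solution vertex can have huge degree. So there is no evident way to transfer $\sum_{v\notin S}(s_{G-S}(v)-2)\le-4$ back to $s_G$.

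You already have the key observation: a face with $s(f)=0$ is bounded by a closed directed walk. Apply it in $G$ itself. First discard all vertices and arcs on no directed cycle, so that $b(v)\ge1$ everywhere; otherwise sources and sinks enter $L$ with weight $-1$. Applied per component, your identity then gives at least $L+2$ faces with $s(f)=0$. Each such face contains a vertex $x\in S$, and the corner of $x$ inside that face is alternating. Distinct faces use distinct corners, so there are at most $\sum_{x\in S}2b(x)=2(|S|+w(S))$ such faces. Hence $L\le 2|S|+2w(S)-2$ (Lemma~\ref{lem-weight}). This is your target inequality with $c=1$, and no bookkeeping under deletion is needed.

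\textbf{The algorithms.} These parts are reverse-engineered from the stated constants rather than derived: no probabilities, recurrence, or branching vector is written down. They also rest on a false premise, because the alternating weight of a feasible instance is not $O(k)$. Join one vertex $x$ to $y_1,\dots,y_t$, each by four parallel arcs whose orientations alternate around $x$. Then $S=\{x\}$ is a solution with $k=1$, yet $w(x)=2t-1$ and $w(y_i)=1$ for every $i$.

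Consequently, your deterministic measure ``$k$ plus scaled alternating weight outside $U$'' is unbounded in $k$. In the randomized scheme, the number of ``add $v$ to $U$'' guesses is likewise not bounded by $k$. So the claim that all guesses are correct with probability at least $\alpha^{-k}$ has no support.

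The paper never guesses ``not in the solution'' when sampling. Before every sample it runs the completion of Lemma~\ref{lem:protected} with all positive vertices protected. Sampling then only has to hit the $p\le k$ positive target vertices. By the lemma this happens with probability greater than $p/(2(k+p))$. This gives $\mu(k,p)\ge\frac{p}{2(k+p)}\mu(k-1,p-1)$ with $\mu(k,0)=1$. The bound $4^p\binom{(k+p)/2}{p}\le(2+\sqrt5)^k$ then follows from $1+4(2+\sqrt5)=(2+\sqrt5)^2$.

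The paper's deterministic algorithm branches on faces, not vertices. When $q\ge3m$, the identity plus the edge bound for simple bipartite planar graphs yields an unmarked circular face with at most four deletable nonbimodal vertices. Branching on it gives four deletion branches and one branch that marks the face. Marked faces force $\lceil m/2\rceil$ bimodal solution vertices, because a bimodal vertex lies on at most two circular faces. When $q<3m$, the algorithm enumerates instead. Balancing $4/c+c^{-\alpha}=1$ against $(1+1/c)^3/\sqrt c\le c^{-\alpha}$ gives $c=8.030$.
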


Our algorithms use a structural relation between the alternation
of incoming and outgoing arcs around vertices of a plane digraph
and its directed facial walks. 
For DFVS we may assume without loss of generality that every vertex has at least ine in-neighbor and at least one out-neighbor. 
A vertex is \emph{bimodal} if
its incoming arcs occur consecutively in the cyclic order
around it, as do its outgoing arcs; otherwise, it is
\emph{nonbimodal}. 

It turns out that the following problem is polynomial time solvable: given a plane digraph $D$, find the smallest feedback vertex set which only contains bimodal vertices. 
This polynomial time algorithm follows from a simple reduction to \textsc{Planar Directed Feedback Arc Set} (\textsc{Planar DFAS}), which in turn known to be solvable in polynomial time~\cite{LY78,MSW19} \todo{!}.
%
%Using the known polynomial-time algorithm for \textsc{Planar DFAS} as a subroutine, we can find a minimum DFVS restricted to bimodal vertices. 
%
Our algorithms therefore focus on selecting the nonbimodal vertices of a solution, after which the remaining bimodal vertices can be found in polynomial time. \todo{completion routine is chat gpt lingo}

\paragraph{Vertex weights and DFVS.}
For the counting argument, we delete all arcs between distinct
strongly connected components and discard components containing
no directed cycle. These operations preserve every directed
cycle. We call the resulting plane digraph the \emph{cyclic core}
and count faces separately in the inherited embedding of each
of its strongly connected components.
If the cyclic order of the arcs around a vertex $v$ has
$2b(v)$ switches between incoming and outgoing arcs, define
its \emph{weight} by    $\wt(v)=b(v)-1$. 
For a vertex set $V$, write
$\wt(V)=\sum_{v\in V}\wt(v)$, and let
$L=\wt(V(D))$ denote the total weight of the cyclic core $D$.
Every vertex of $D$ has both an incoming and an outgoing
arc, so its weight is nonnegative. Moreover, a vertex has
weight zero precisely when it is bimodal.

\todo{cyclic core needs a definition}
\todo{do we need to consider components separately? I guess yes to avoid the $r$}
\todo[inline]{w is a mneumonic for weight, we should call it weight}

% \paragraph{Alternation excess and directed faces.}\todo{cyclic core needs a definition}
% Consider first the cyclic core of a plane digraph, obtained after
% discarding parts that cannot participate in a directed cycle and
% working separately inside its nontrivial strongly connected
% components. \todo{do we need to consider components separately? I guess yes to avoid the $r$}  If the cyclic order around a vertex $v$ has $2b(v)$
% switches between incoming and outgoing arcs, define
% \[
%     \wt(v)=b(v)-1,
%     \qquad
%     L=\sum_{v} \wt(v).
% \]
% \todo[inline]{w is a mneumonic for weight, we should call it weight}
% Every vertex of the cyclic core has at least one incoming and one
% outgoing arc, and hence $\wt(v)\ge 0$.  Moreover, $\wt(v)=0$ precisely when
% $v$ is \emph{bimodal}: its incoming arcs occur consecutively around
% $v$, and so do its outgoing arcs.

Our main structural lemma (Lemma~\ref{lem-weight}) relates the total vertex weight $L$ to the weight of a DFVS.
Essentially, we show that $L<2(|X|+w(X))$ for any DFVS $X$ of $D$.
The proof follows by introducing an intermediate quantity, $\ell_{\rm cir}$, which is the number of \textit{circular faces} of $D$.
We say a face $f$ of $D$ is \textit{circular} if there is a facial boundary walk around $f$ that follows the direction of each arc.
On one hand, by applying Euler's formula and counting arguments, we show that $L < \ell_{\rm cir}$.
On the other hand, by observing that every circular face must intersect the DFVS $X$ and $2b(x)$ is at least the number of circular faces incident to $x$ for each $x \in X$, we show that $\ell_{\rm cir} \leq \sum_{x \in X} 2b(x) \leq 2(|X|+w(X))$.
This implies the inequality $L<2(|X|+w(X))$, which is the common starting point of our randomized and deterministic algorithms.

 \todo{do we really need the auxiliary bipartite graph here -- the only fact we ever use is that every corner is an alternation for precisely one of its vertex or its face}
 \todo[inline]{alternation excess is just weight}
 
% Our basic topological lemma is obtained from an auxiliary bipartite
% vertex--face graph whose edges represent corners of the embedding.  \todo{do we really need the auxiliary bipartite graph here -- the only fact we ever use is that every corner is an alternation for precisely one of its vertex or its face}
% A corner is colored according to whether the two incident arcs switch
% between incoming and outgoing at the vertex, or equivalently whether a
% facial traversal changes its agreement with the arc directions.  Euler's
% formula then gives an exact counting identity.  One consequence is that,
% if the cyclic core has $r$ strongly connected components and
% $F_{\rm dir}$ directed facial boundary walks, then
% \[
%     F_{\rm dir}\ge L+2r.
% \]
% On the other hand, every directed facial boundary walk must meet every
% DFVS $X$.  Charging such a face to an incident in/out corner at a
% vertex of $X$ yields
% \[
%     F_{\rm dir}
%     \le \sum_{v\in X}2b(v)
%     =2|X|+2\wt(X).
% \]
% Combining the two bounds gives the structural inequality
% \begin{equation}
% \label{eq:intro-weight}
%     \boxed{L+2r\le 2|X|+2\wt(X).}
% \end{equation}
% This inequality is the common starting point of all our algorithms.
% Roughly speaking, if the total alternation excess is large, then a
% solution must contain a substantial amount of this excess.
% \todo[inline]{alternation excess is just weight}
\paragraph{Finding the bimodal part of a solution.}
We now explain the polynomial-time algorithm for finding
a minimum DFVS that only contains bimodal vertices.
For each bimodal vertex $v$, split $v$ into $v^-$ and $v^+$, attach all incoming arcs to $v^-$ and
all outgoing arcs to $v^+$, and add the arc $(v^-,v^+)$.
Since the incoming and outgoing arcs each occur consecutively around $v$, this operation preserves planarity. Assign unit cost to the added arcs and prohibitive cost to all original arcs, leaving nonbimodal vertices unchanged.
Deleting a bimodal vertex $v$ then corresponds to deleting its corresponding arc $(v^-,v^+)$.
Thus the problem reduces to weighted \textsc{Planar Directed Feedback Arc Set}, which is polynomial-time solvable~\cite{LY78,MSW19}. This splitting argument builds on the work of Bai, Cao, and Xiao~\cite{BaiEtAl26}, who showed that \textsc{DFVS} is polynomial-time solvable on planar digraphs in which every vertex has in-degree at most one or out-degree at most one.
%Every such vertex is bimodal. Their result was the starting point of our approach. We use the same argument when nonbimodal vertices are also present but cannot be deleted.

\todo[inline]{i really dont like the name completion here; i think it comes from the intended use that we extend a partial solution to a complete one, but the reader does not know this yet. Personally i like the idea of saying that $L=w(G)$ and now the "completion" problem is just the zero weight case.}

\paragraph{Randomized algorithms.}
As a warm-up, we give a randomized algorithm that
samples vertices proportionally to their weights
while $L>3k$, where $k$ is the remaining budget.
After each selection, we delete the sampled vertex,
decrease the budget by one, and recompute the cyclic
core and the weights. Once $L\le3k$, at most $3k$
vertices are nonbimodal, since each has weight at
least one. We enumerate all subsets of these vertices
of size at most $k$. For each choice, we delete the
selected vertices and find a minimum feedback vertex
set contained in the remaining bimodal vertices,
checking whether the combined solution fits within
the budget. Combining sampling with enumeration
gives a running time of
$6.75^k n^{\OO(1)}$.

Our faster algorithm uses the polynomial-time
algorithm at every step. We first find a minimum
feedback vertex set contained in the currently
bimodal vertices. If its size is at most the
remaining budget, we return it together with the
vertices already selected. Otherwise, we sample
a vertex proportionally to its weight, delete it,
decrease the budget by one, and recompute the cyclic
core and the weights. If the budget or the total
weight is zero when sampling would be required,
the trial fails.

For the analysis, let $k$ denote the remaining
budget and fix a feedback vertex set $X$ of size
at most $k$. Let $s$ be the number of currently
nonbimodal vertices in $X$. If $s=0$, the
polynomial-time algorithm finds a solution within
the budget. Otherwise, each of these $s$ vertices
has weight at least one, so $\wt(X)\ge s$.
By the inequality $L<2(|X|+w(X))$, proportional sampling
selects a vertex of $X$ with probability
\[
    \frac{\wt(X)}{L}
    \ge \frac{\wt(X)}{2(k+\wt(X))}
    \ge \frac{s}{2(k+s)}.
\]
After such a selection, the remaining budget is
$k-1$ and at most $s-1$ vertices of the remaining
target solution are nonbimodal. Their number may
decrease further, since recomputing the cyclic
core can make additional vertices bimodal.

The analysis tracks both $k$ and $s$: a smaller
value of $s$ gives a weaker bound on the probability
of selecting a solution vertex, but also means
that fewer successful selections are needed.
Indeed, after at most $s$ such selections, the
remaining target solution consists entirely of
bimodal vertices and can be found in polynomial
time. We solve the resulting two-parameter
recurrence and show that a single trial succeeds
with probability at least
$\frac12(2+\sqrt5)^{-(k+1)}$, where $k$ now denotes
the initial budget. Repeating the trial independently
$\OO((2+\sqrt5)^k)$ times gives constant success
probability and the randomized running time in
Theorem~\ref{thm:main-intro}. The error is one-sided:
every returned solution is a feedback vertex set
of size at most $k$.

\todo[inline]{$j$ and $s$ seem random isnt $k$ better for $j$?}
\todo[inline]{be more descriptive find smallest FVS which is a subset of the bimodal vertices} 
\todo{find a solution of size at most $k$}
\todo[inline]{there are many points here: when s is zero the poly time algorithm finds the optimal solution, when selecting vertices with probability prop to weight, only the s vertices have non zero probability.}
\todo{write $6.75^k$}

\paragraph{Deterministic algorithm.}
Our deterministic algorithm branches on directed facial
walks containing few deletable nonbimodal vertices.
Here, a vertex is \emph{deletable} if it may still be
included in the solution, and \emph{undeletable}
otherwise. Suppose a circular face $f$ is incident to deletable nonbimodal vertices $v_1,\ldots,v_t$.
Every solution either deletes one of these vertices or intersects the face $f$ at a bimodal vertex.
We therefore create $t$ branches: in branch $i$, we delete $v_i$ and make $v_1,\ldots,v_{i-1}$
undeletable. In one additional branch, we make all of $v_1,\ldots,v_t$ undeletable and \emph{mark}
the face, recording that it must be hit at a bimodal vertex.

Marked faces give a lower bound on the number of bimodal vertices in a solution.
A bimodal vertex belongs to at most two directed facial boundary walks, so hitting $m$ marked faces requires at
least $\lceil m/2\rceil$ bimodal vertices.
Consequently, with remaining budget $k$, a solution contains at most $k-\lceil m/2\rceil$ nonbimodal vertices.
We maintain the marked faces and this bound as vertices are deleted and the cyclic core is recomputed.
If $m>2k$, the branch has no feasible solution and is rejected.

To find a face to branch, we use planarity again. Let $q$ be the number of deletable
nonbimodal vertices. Form the bipartite incidence
graph between these vertices and the unmarked
circular faces, retaining one edge for each distinct vertex-face incidence. This graph is simple and planar.
Its edge bound, together with the inequality $\ell_{\rm cir}\ge L$, implies that whenever $q\ge3m$, some unmarked circular face contains at most $4$ deletable nonbimodal vertices. Branching on this face therefore creates at most $5$ branches.
A deletion branch decreases the budget, whereas the marking branch increases the number of faces that must be hit at bimodal vertices.

When $q<3m$, we instead enumerate all subsets of the deletable nonbimodal vertices of size at most $k-\lceil m/2\rceil$.
For each choice, we delete the selected vertices and use the polynomial-time algorithm to find a minimum DFVS that only contains deletable bimodal vertices.
Thus branching applies when there are many nonbimodal candidates relative to the number of marked faces, while enumeration applies when
there are few candidates and the marked faces limit how many may be selected.
An analysis using a potential of the form $c^k\beta^m$ bounds the combined cost by $8.030^k n^{\OO(1)}$.
The branches and enumerated subsets are processed sequentially, so the algorithm uses polynomial space.

%\paragraph{Remark.}
The deterministic running time can be improved by allowing branching on circular faces with more deletable nonbimodal vertices, thereby postponing enumeration.
For example, branching on faces with at most four such vertices when $q\ge3m$, and at most six when $5m/3\le q<3m$, gives a running
time of $7.589^k n^{\OO(1)}$ with a refined analysis.
Adaptively choosing the bound on the number of deletable nonbimodal vertices from $\{4,\ldots,10\}$ further improves the running time
to $7.435^k n^{\OO(1)}$.
Both variants use polynomial space. We leave these improvements out of the present version and focus on the simpler $8.030^k n^{\OO(1)}$-time algorithm.

\paragraph{Organization.}
Section~\ref{sec:prelims} fixes notation and the cyclic core.
Section~\ref{sec:tools} proves the structural lemma and gives
a single-exponential FPT algorithm for the problem parameterized by the number of nonbimodal vertices.
Section~\ref{sec:randomized} presents the two randomized algorithms,
and Section~\ref{sec:deterministic} the deterministic one.
Section~\ref{sec:conclusion} lists open problems.

% prelims_v2.tex -- version 2
% Preliminaries. Defines everything the later sections use without
% definition: loops, protected/permitted vertices, plane embeddings and
% inherited embeddings, bimodality, the cyclic core, and the entropy
% estimates. New labels: sec:prelims, obs:loops, lem:core,
% lem:switch-monotone, fact:entropy.
% New citation keys to add to the bibliography: HT74, Tar72.

\section{Preliminaries}
\label{sec:prelims}

\paragraph{Digraphs and the problem.}
Digraphs may have parallel arcs. For a digraph $D$ we write $V(D)$ and
$E(D)$ for its vertex and arc sets, and $D-X$ for the digraph obtained
by deleting $X\subseteq V(D)$. A \emph{directed feedback vertex set}
(DFVS) of $D$ is a set $X\subseteq V(D)$ such that $D-X$ is acyclic.
In \textsc{Directed Feedback Vertex Set} we are given $D$ and an
integer $k$, and ask for a DFVS of size at most $k$;
\textsc{Planar DFVS} is the restriction to digraphs whose underlying
undirected graph is planar. A DFVS of size at most the current budget
is called a \emph{solution}; when we fix a solution only for the
analysis, we call it a \emph{target}.

%Several of our algorithms maintain a set $P$ of \emph{protected} vertices, which may not be deleted. The remaining vertices are \emph{permitted}, and a solution is \emph{permitted} if it contains only permitted vertices.

\begin{observation}\label{obs:loops}
A vertex carrying a loop belongs to every DFVS. Hence we may delete all
such vertices, reduce the budget by their number, and reject if the
budget becomes negative. From now on all digraphs are loopless.
\end{observation}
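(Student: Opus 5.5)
The observation has two parts: every vertex with a loop lies in every DFVS, and deleting all such vertices while lowering the budget is a safe reduction. I will prove them in that order.

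For the first part, take a vertex $v$ with a loop $(v,v)$. The loop is a directed cycle of length one whose only vertex is $v$. If $X$ is a DFVS, then $D-X$ is acyclic, so this cycle cannot survive in $D-X$. Deleting vertices removes exactly the arcs incident to the deleted vertices, so the loop survives unless $v\in X$. Hence $v\in X$.

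For the second part, let $S$ be the set of vertices carrying a loop and set $D'=D-S$. I claim that $(D,k)$ is a yes-instance if and only if $(D',k-|S|)$ is. Suppose $X$ is a DFVS of $D$ with $|X|\le k$. By the first part $S\subseteq X$. Moreover $D'-(X\setminus S)=D-X$, which is acyclic, so $X\setminus S$ is a DFVS of $D'$ of size at most $k-|S|$. Conversely, suppose $X'$ is a DFVS of $D'$ with $|X'|\le k-|S|$. Then $X'\cup S$ has size at most $k$ and $D-(X'\cup S)=D'-X'$ is acyclic. If $k-|S|<0$, then $D$ has no DFVS of size at most $k$, since any such set would contain $S$; so rejecting is correct.

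The same correspondence $X\mapsto X\setminus S$ and $X'\mapsto X'\cup S$ converts solutions between the two instances, so the algorithms can return them directly. The digraph $D'$ is loopless, because every loop is incident only to its own vertex and that vertex lies in $S$. Deleting vertices also preserves planarity and any embedding, so the reduction stays within \textsc{Planar DFVS}. It can be carried out in linear time.

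I do not see a real obstacle here. The only point to state explicitly is that a loop counts as a directed cycle, which follows from allowing loops in digraphs, and that removing a vertex removes all cycles through it.
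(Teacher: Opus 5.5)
Your proof is correct and follows the same reasoning the paper leaves implicit. The paper states this as an observation without proof, relying on the fact that a loop is a one-vertex directed cycle; your correspondence $X\mapsto X\setminus S$ and $X'\mapsto X'\cup S$ simply makes the safety of the reduction explicit.
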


\paragraph{Embeddings.}
A \emph{plane digraph} is a digraph together with a crossing-free
drawing of its underlying graph in the plane. The drawing determines,
at every vertex, a cyclic order of its arc incidences, and it determines
the faces and facial walks used in Section~\ref{sec:bipartite}. Given a
planar digraph, a plane embedding can be computed in linear
time~\cite{HT74}; we fix one at the start, and all quantities below
refer to this embedding. Deleting vertices or arcs from a plane digraph
yields a plane digraph whose embedding is the restriction of the
original one; we call it the \emph{inherited} embedding. None of our
results depends on the choice of the initial embedding.
% structural_tools_v2.tex -- version 2
% ---------------------------------------------------------------------
% Merged structural section: old Sections 1-3 become Subsections 2.1-2.3
% (numbering depends on frontmatter). Old section labels are kept on the
% subsections, so \ref{sec:bipartite}, \ref{sec:alternation} and
% \ref{sec:completion} still resolve. New labels: sec:tools,
% cor:circular-faces, eq:euler, eq:circular-face-formula, lem:alt.
% lem:alt was referenced but undefined in the old draft; it is now the
% "weight share" corollary.
% ---------------------------------------------------------------------

\section{Structural tools}
\label{sec:tools}

This section collects the three ingredients used by all our algorithms.
Subsection~\ref{sec:bipartite} proves an exact Euler-type identity for
plane digraphs. Subsection~\ref{sec:alternation} uses it to show that
every small solution carries a large share of the alternation excess.
Subsection~\ref{sec:completion} shows that solutions avoiding all
nonbimodal vertices can be found in polynomial time.

\subsection{Corners and circular faces}
\label{sec:bipartite}

Throughout this subsection, $D$ is a weakly-connected plane digraph with
at least one arc, which possibly contains parallel arcs but does not contain self-loops.
We write $n=|V(D)|$ and $m=|E(D)|$.
Also, let $F(D)$ denote the set of faces (including the outer face) of $D$.
By Euler's formula, we then have $n-m+|F(D)|=2$.
%Euler's formula implies that
%\begin{equation}\label{eq:euler}
  %n-m+|F(D)|=2.
%\end{equation}

%\paragraph{Corners and facial walks.}
Consider a vertex $v \in V(D)$.
The plane embedding of $D$ induces a cyclic order of the $\deg_D(v)$ arcs
incident to $v$.
A \emph{corner} at $v$ is a pair of arcs incident to $v$ that are consecutive in the cyclic order, together with the angular sector between them.
The sector lies in a unique face $f \in F(D)$, and we say the corner {\em lies in} the face $f$.
We say the corner is \textit{alternating} if one of the two arcs is incoming and the other one is outgoing; otherwise, we say the corner is \textit{consistent}.
Note that the number of alternating corners at $v$ is always even.
Define $b(v) \in \mathbb{N}$ as the number such that there are $2 b(v)$ alternating corners at $v$.
If $v$ is a degree-$1$ vertex, then the unique arc incident to $v$ is consecutive to itself and thus $v$ has exactly one corner (which is consistent).
By our construction, every vertex $v \in V(D)$ has $\deg_D(v)$ corners, and $D$ has $2m$ corners in total.

\begin{lemma}
    For a face $f \in F(D)$, the number of consistent corners lying in $f$ is even.
\end{lemma}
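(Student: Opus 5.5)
We prove the statement by walking once around each boundary walk of $f$ and tracking, at every step, whether we traverse the current arc forward or backward. The face $f$ may have several boundary walks if its boundary is disconnected, but $D$ is weakly connected, so $f$ has a single closed boundary walk. Even without this, the argument below applies to each closed boundary walk separately and the parities add, so we do not rely on connectivity. Fix an orientation of traversal, say keeping $f$ on the left. The walk is a cyclic sequence of arc traversals $e_1,e_2,\dots,e_t$ (indices mod $t$). Consecutive traversals $e_i,e_{i+1}$ meet at a vertex $v$, and the pair forms exactly one corner at $v$ lying in $f$. Conversely, every corner lying in $f$ arises exactly once in this way. This is the standard bijection between corners of the embedding and consecutive pairs along facial walks, given by the rotation system. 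We also have to treat the degree-$1$ case: at a degree-$1$ vertex the walk goes along $e$ and immediately back along $e$, and the single corner there is exactly this turnaround.

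Label each traversal $e_i$ with a sign $\sigma_i\in\{+1,-1\}$. Set $\sigma_i=+1$ if the walk traverses the arc in its own direction, and $\sigma_i=-1$ otherwise. The key local claim is that the corner between $e_i$ and $e_{i+1}$ at $v$ is consistent if and only if $\sigma_i\neq\sigma_{i+1}$. To check it, note that $e_i$ is traversed \emph{into} $v$. So $\sigma_i=+1$ exactly when $e_i$ is an incoming arc at $v$. Likewise $e_{i+1}$ is traversed \emph{out of} $v$, so $\sigma_{i+1}=+1$ exactly when $e_{i+1}$ is an outgoing arc at $v$. The corner is alternating exactly when one arc is incoming and the other outgoing at $v$, which happens exactly when $\sigma_i=\sigma_{i+1}$. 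Hence consistent corners correspond to sign changes.

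In the degree-$1$ case the same arc is traversed twice, once in each direction. The two signs therefore differ, and this agrees with the paper's convention that this corner is consistent. Parallel arcs cause no trouble, since everything is phrased in terms of arc incidences. Loops are excluded by hypothesis; this matters, because with a loop an arc could enter and leave the same vertex and break the incoming/outgoing dichotomy at a single incidence.

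The number of consistent corners in $f$ therefore equals the number of indices $i$ with $\sigma_i\neq\sigma_{i+1}$ in a cyclic $\pm1$ sequence. In any cyclic sequence the number of sign changes is even, because the product $\prod_i \sigma_i\sigma_{i+1}=\prod_i\sigma_i^2=1$. Summing over the boundary walks of $f$ gives the claim. The only step needing care is setting up the corner–walk bijection precisely, including degree-$1$ vertices and faces with several boundary components. Everything else is the sign bookkeeping above.
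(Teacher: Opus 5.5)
Your proof is correct and takes the same approach as the paper. Both traverse the facial walk of $f$, classify each step as forward or backward, note that consistent corners are exactly the places where the walk switches between the two, and conclude from the evenness of the number of switches around a closed walk. Your version is a bit more careful: you prove the ``if and only if'' in the local correspondence, which the paper states in only one direction, and you check the degree-$1$ convention explicitly.
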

\begin{proof}
Consider a facial walk around $f$ that traverses the boundary of $f$.
The pairs of consecutive arcs on this walk one-to-one correspond to the corners lying in $f$.
Each step of the walk can be either a \textit{forward move} (which follows the direction of the arc) or a \textit{backward move} (which opposes the direction of the arc).
Whenever the walk visits a consistent corner, it switches between forward moves and backward moves.
Note that the number of such switches has to be even.
As a result, the number of consistent corners lying in $f$ is even.
%A vertex that occurs several times on the walk contributes one corner per occurrence, and a bridge is traversed twice.
%All counts in this subsection are counts of corners, that is, of occurrences, and not of distinct vertex-face pairs.
\end{proof}

By the above lemma, for each face $f \in F(D)$, there exists a number $a(f) \in \mathbb{N}$ such that exactly $2 a(f)$ consistent corners lie in $f$.
We say $f$ is a \textit{circular face} if $a(f) = 0$, or equivalently, there is a facial
walk of $f$ (in one of its two traversal directions) that follows the directions of the arcs.
The total number of consistent corners is $\sum_{f \in F(D)} 2a(f)$ and the total number of alternating corners is $\sum_{v \in V(D)} 2b(f)$.
Since the total number of corners is $2m$, we have $\sum_{f \in F(D)} a(f) + \sum_{v \in V(D)} b(f) = m$.
Based on this, we further establish a formula for the number of circular faces.
\begin{lemma} \label{lem-diface}
    Let $\ell_{\rm dir}(D)$ be the number of circular faces of $D$.
    Then
\begin{equation}\label{eq:directed-face-formula}
  \ell_{\rm dir}(D)
  =2+\sum_{\substack{f\in F(D)\\ a(f)\ge1}}\bigl(a(f)-1\bigr)+\sum_{v\in V(D)}\bigl(b(v)-1\bigr).
\end{equation}
\end{lemma}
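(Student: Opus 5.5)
Combine Euler's formula with the corner-count identity $\sum_{f\in F(D)} a(f)+\sum_{v\in V(D)} b(v)=m$ established just above. Write $|F(D)|=\ell_{\rm dir}(D)+|\{f: a(f)\ge 1\}|$: circular faces are exactly those with $a(f)=0$, and every other face has $a(f)\ge1$.

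First, rewrite the corner identity by subtracting one per face and one per vertex:
\[
\sum_{f\in F(D)}\bigl(a(f)-1\bigr)+\sum_{v\in V(D)}\bigl(b(v)-1\bigr)=m-|F(D)|-n .
\]
Euler's formula $n-m+|F(D)|=2$ gives $m-n-|F(D)|=-2-2\bigl(|F(D)|-m\bigr)+\dots$. This is not simplifying cleanly, so instead substitute $m=n+|F(D)|-2$ directly:
\[
\sum_{f}a(f)+\sum_{v}b(v)=n+|F(D)|-2 .
\]
Subtracting $n$ and $|F(D)|$ from both sides yields
\[
\sum_{f}\bigl(a(f)-1\bigr)+\sum_{v}\bigl(b(v)-1\bigr)=-2 .
\]

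Next, split the face sum according to whether $a(f)=0$ or $a(f)\ge1$. Each circular face contributes exactly $-1$, so the face sum equals $-\ell_{\rm dir}(D)+\sum_{a(f)\ge1}(a(f)-1)$. Moving $\ell_{\rm dir}(D)$ to one side gives \eqref{eq:directed-face-formula}.

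The only point needing care is that Euler's formula and the corner count apply in the stated setting. $D$ is weakly connected with at least one arc, so $n-m+|F(D)|=2$ holds. Loops are excluded and parallel arcs are allowed, which affects neither Euler's formula nor the corner bookkeeping. Degree-$1$ vertices have one consistent corner and $b(v)=0$, and bridges give corners counted per occurrence in facial walks. This is already built into the statement that there are $2m$ corners, each lying in exactly one face and being either alternating or consistent. I expect no real obstacle beyond confirming this identity, which the preceding paragraph has already done.
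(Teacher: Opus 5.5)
Your proof is correct and is essentially the paper's argument. You subtract one per face and one per vertex from $\sum_f a(f)+\sum_v b(v)=m$, use Euler's formula to get $-2$ on the right, and split off the circular faces, each contributing $-1$. The abandoned line ``$m-n-|F(D)|=-2-2(|F(D)|-m)+\dots$'' is garbled and should simply be deleted, since Euler's formula gives $m-n-|F(D)|=-2$ directly.
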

\begin{proof}
By Euler's formula, it holds that $n-m+|F(D)| = 2$.
Combining this with the equality 
$$\sum_{f \in F(D)} a(f) + \sum_{v \in V(D)} b(f) = m$$ 
we have 
$$\sum_{f \in F(D)} (a(f)-1) + \sum_{v \in V(D)} (b(f)-1) = -2\mbox{,}$$
Write $F'(D) = \{f \in F(D): a(f) \geq 1\}$.
Note that 
$$\sum_{f \in F(D)} (a(f)-1) = \sum_{f \in F'(D)} (a(f)-1) - \ell_{\rm dir}(D)\mbox{.}$$
Therefore, $\ell_{\rm dir}(D) = 2+\sum_{f \in F'(D)} (a(f)-1) + \sum_{v \in V(D)} (b(f)-1)$.
\end{proof}

\subsection{Weight of a DFVS}
\label{sec:alternation}

For each vertex $v \in V(D)$, we define $w(v) = b(v)-1$ as the \textit{weight} of $v$.
Write $w(V) = \sum_{v \in V} w(v)$ for a subset $V \subseteq V(D)$, and $L = w(V(D))$.
We call a vertex $v \in V(D)$ \textit{bimodal} if $w(v) \leq 0$, and \textit{positive} if $w(v) > 0$.
Intuitively, a vertex $v$ is bimodal if the incoming arcs (resp., outgoing arcs) at $v$ are consecutive in the cyclic order around $v$.

\begin{lemma} \label{lem-weight}
%Assume every vertex $v \in V(D)$ has in-degree and out-degree at least $1$ in $D$.
For a directed feedback vertex set $X$ of $D$, we have
\begin{equation}
  |X|+ \wt(X) \geq \frac{L}{2} + 1.
\end{equation}
\end{lemma}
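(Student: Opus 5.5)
The plan is to sandwich the number $\ell_{\rm dir}(D)$ of circular faces between the two sides of the claimed inequality. Multiplying by $2$ and using $w(x)=b(x)-1$, the statement is equivalent to
\[
  \sum_{x\in X} 2b(x) \;=\; 2|X|+2w(X) \;\geq\; L+2 .
\]
So it suffices to show $L+2\le \ell_{\rm dir}(D)$ and $\ell_{\rm dir}(D)\le \sum_{x\in X}2b(x)$.

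\textbf{Lower bound.} This comes directly from Lemma~\ref{lem-diface}. In \eqref{eq:directed-face-formula}, the middle sum ranges only over faces with $a(f)\ge 1$, so each of its terms $a(f)-1$ is nonnegative. The last sum is exactly $L$. Dropping the middle sum gives $\ell_{\rm dir}(D)\ge 2+L$.

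\textbf{Upper bound.} I will charge every circular face to an alternating corner at a vertex of $X$, injectively.

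First, fix a circular face $f$. Since $D$ is weakly connected, $f$ has a single facial walk. Because $a(f)=0$, this walk can be traversed so that every step is a forward move. It is therefore a closed directed walk in $D$, and it is nontrivial since $D$ has at least one arc.

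Second, this walk contains a directed cycle. A closed directed walk always contains one. Since $D$ is loopless, the cycle has length at least $2$; a $2$-cycle may use parallel or antiparallel arcs.

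Third, $X$ meets that cycle, so some occurrence of a vertex $x\in X$ lies on the facial walk of $f$. At that occurrence the walk enters $x$ along an incoming arc and leaves along an outgoing arc. The corresponding corner, which lies in $f$, is therefore alternating.

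Finally, the map from circular faces to corners is injective, because each corner lies in exactly one face. Hence $\ell_{\rm dir}(D)$ is at most the total number of alternating corners at vertices of $X$, which is $\sum_{x\in X}2b(x)$.

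Combining the two bounds gives $L+2\le 2|X|+2w(X)$, which is the claim.

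\textbf{Expected obstacle.} The delicate step is justifying that a circular face's boundary is a genuine closed directed walk that meets $X$ at an alternating corner of that same face. This uses three facts:
\begin{itemize}
\item the correspondence between consecutive arc pairs on the facial walk and corners in $f$, from the proof of the parity lemma;
\item weak connectivity, which gives a single boundary walk per face;
\item the fact that a bridge is traversed in both directions, so it cannot appear on a circular face.
\end{itemize}
I would spell out the degenerate cases, namely degree-$1$ vertices and faces bounded by two parallel arcs, to confirm that they are consistent with this argument.
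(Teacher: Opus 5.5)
Your proof is correct and follows the paper's argument. The lower bound $\ell_{\rm dir}(D)\ge L+2$ comes from Lemma~\ref{lem-diface} by dropping the nonnegative middle sum, and the upper bound comes from injectively charging each circular face to an alternating corner in that face at a vertex of $X$; you are slightly more careful than the paper in extracting a directed cycle from the closed directed facial walk rather than asserting the boundary is itself a cycle.
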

\begin{proof}
By Lemma~\ref{lem-diface}, $\ell_{\rm dir}(D) \geq L+2$.
So it suffices to show that $2(|X|+\wt(X)) \geq \ell_{\rm dir}(D)$.
Note that $|X|+\wt(X) = \sum_{x \in X} b(x)$.
For a circular face $f \in F(D)$, its boundary form a directed cycle and hence $f$ must be incident to some $x_f \in X$.
We then charge each circular face $f \in F(D)$ to the corner at $x_f$ that lies in $f$ (which is alternating because $f$ is circular).
Note that each alternating corner at a vertex $x \in X$ gets charged at most once, because a corner can only lie in one face.
It then follows that the number of circular faces is at most $\sum_{x \in X} 2b(x)$, i.e., $\ell_{\rm dir}(D) \leq \sum_{x \in X} 2b(x) = 2(|X|+\wt(X))$.
\end{proof}

\begin{corollary} \label{cor:sample}
Let $X$ be a directed feedback vertex set of $D$ of size at most $k$.
Suppose we randomly sample a vertex $v^*$ of $D$, where each vertex $v \in V(D)$ is sampled with probability $\frac{\wt(v)}{L}$.
If $L > 3k$, then we have $\Pr[v^* \in X] > \frac{1}{6}$.
\end{corollary}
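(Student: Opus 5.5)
The plan is to combine Lemma~\ref{lem-weight} with the fact that $\wt$ is nonnegative on the relevant vertices. We work in the cyclic-core setting, where every vertex has an incoming and an outgoing arc. There $b(v)\ge1$, so $\wt(v)\ge 0$ and the sampling distribution $\wt(v)/L$ is well defined; note $L>3k\ge 0$, so $L>0$. Since $\wt\ge 0$, the probability that the sampled vertex lies in $X$ is exactly
\[
\Pr[v^*\in X]=\frac{\wt(X)}{L}.
\]
It therefore suffices to show $\wt(X)>L/6$.

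Lemma~\ref{lem-weight} gives $|X|+\wt(X)\ge L/2+1$. Using $|X|\le k$, this rearranges to
\[
\wt(X)\ge \frac{L}{2}+1-k.
\]
The hypothesis $L>3k$ means $k<L/3$, and substituting yields
\[
\wt(X)>\frac{L}{2}+1-\frac{L}{3}=\frac{L}{6}+1>\frac{L}{6}.
\]
Dividing by $L>0$ gives $\Pr[v^*\in X]>1/6$, as claimed.

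The only real care needed is the nonnegativity of $\wt$, which is what makes $\wt(v)/L$ a genuine probability distribution and makes the probability of hitting $X$ equal to $\wt(X)/L$. This is exactly the cyclic-core assumption stated in the introduction. If isolated or degenerate vertices are allowed, with $b(v)=0$, I would restrict the sampling to positive vertices and replace $L$ by their total weight. That total is at least $L$, and the weight of $X$ counted over positive vertices is at least $\wt(X)$, but this changes both numerator and denominator, so the clean route is simply to assume the cyclic core. Everything else is a one-line substitution into Lemma~\ref{lem-weight}.
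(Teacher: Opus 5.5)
Your proof is correct and takes the same route as the paper: combine Lemma~\ref{lem-weight} with $|X|\le k$ and $L>3k$, and use the nonnegativity of $\wt$ on the cyclic core to get $\Pr[v^*\in X]=\wt(X)/L$. Your substitution $k<L/3$, which gives $\wt(X)>L/6+1$, is the right way to close the argument; the paper's written proof instead derives $\wt(X)>k/2$, and together with $L>3k$ that bound does not control $\wt(X)/L$ from below, so your version also repairs a slip in the paper.
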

\begin{proof}
By Lemma~\ref{lem-weight}, $\wt(X) > \frac{L}{2} - |X|$.
If $L > 3k$, then $\frac{L}{2} - |X| \geq \frac{k}{2}$, which implies $\wt(X) > \frac{k}{2}$.
Therefore, $\Pr[v^* \in X] = \frac{\wt(X)}{L} \geq \frac{1}{6}$.
\end{proof}

\subsection{Problem with a few positive vertices}
\label{sec:completion}

Recall that $w(v) = b(v)-1$ for $v \in V(D)$.
Let $p$ be the number of positive vertices of $D$, i.e., vertices $v \in V(D)$ satisfying $w(v) > 0$.
In this section, we show that one can solve \textsc{Directed Feedback Vertex Set} on $D$ in roughly $2^p \cdot n^{O(1)}$ time.

%We recall the following classical algorithmic consequence of the weighted Lucchesi--Younger theorem and planar duality~\cite{LY78}.
%A statement of the weighted theorem, and of the usual splitting argument for strongly planar digraphs, can be found in~\cite[Theorems~4.10 and~4.12]{MSW19}.

We recall the \textsc{(Weighted) Feedback Arc Set} problem, which aims to compute, for a given edge-weighted digraph $G$, a minimum-weighted feedback arc set (FAS) of $G$ (here a set $E$ of edges is a \textit{feedback arc set} if $G-E$ is acyclic).
The problem is known to be polynomial-time solvable if $G$ is planar.

\begin{theorem}[Theorems~4.10 and~4.12 of \cite{MSW19}]\label{thm:fas}
\textsc{Weighted Feedback Arc Set} on planar digraphs admits a polynomial-time algorithm.
%Given a planar digraph with nonnegative integer arc costs, a minimum-cost set of arcs meeting every directed cycle can be found in time polynomial in the input size, including the bit length of the costs.
\end{theorem}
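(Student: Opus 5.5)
This statement is cited from \cite{MSW19}, so my plan is to reconstruct the standard derivation from the weighted Lucchesi--Younger theorem via planar duality rather than to give a new argument.

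\textbf{Reduction to strong connectivity.} First reduce to the case where $G$ is weakly connected and every arc lies on a directed cycle. Arcs between distinct strongly connected components lie on no directed cycle, so they never need to be deleted and can be removed. Each strongly connected component can then be handled separately, and the optimal costs add. Fix a plane embedding (Hopcroft--Tarjan) of each component $C$ and form its planar dual $C^*$. Each arc $e=(u,v)$ of $C$ gets a dual arc $e^*$ crossing $e$, oriented consistently, say from the face on the left of $e$ to the face on the right. The cost of $e^*$ equals the cost of $e$.

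\textbf{Cycles versus cuts.} The key duality fact I would verify is this: directed cycles of $C$ correspond exactly to minimal directed cuts of $C^*$. A directed cut is a bond $\delta(S)$ all of whose arcs go from $S$ to its complement. To see the correspondence, take a simple directed cycle $Z$. It is a Jordan curve separating the faces into inside and outside. Every arc of $Z$ has the inside on the same side (left, say, for a counterclockwise cycle), so all dual arcs of $Z$ cross from inside to outside. Hence $Z^*$ is a directed bond, and the converse follows in the same way.

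Consequently, a set $F$ of arcs is a feedback arc set of $C$ if and only if $F^*$ meets every directed cut of $C^*$, that is, $F^*$ is a \emph{dijoin}. Contracting $F^*$ makes $C^*$ strongly connected, which is the equivalent formulation of a dijoin. So minimum-weight FAS in $C$ is the same problem as minimum-weight dijoin in $C^*$.

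\textbf{Solving the dijoin problem.} The weighted Lucchesi--Younger theorem says that directed cuts form a box-TDI clutter: the minimum weight of a dijoin equals the maximum weighted packing of directed cuts. I would use the algorithmic form of this min--max. Minimum-cost dijoin reduces to a minimum-cost submodular flow (Edmonds--Giles) over the crossing family of directed cuts. It is solvable in strongly polynomial time, e.g.\ by Frank's algorithm, or by the ellipsoid method, since separation over the dijoin polytope $\{x\ge0 : x(\delta^{+}(S))\ge 1 \text{ for all directed cuts } \delta(S)\}$ amounts to a minimum-cut computation. Total dual integrality gives integral optima. Mapping the optimal dijoin back through duality yields the optimal FAS.

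\textbf{Main obstacle.} I expect the main difficulty to be the orientation bookkeeping in the cycle/cut duality, in particular parallel arcs, bridges, and the outer face. Parallel arcs become arcs in series in the dual, and a bridge would be a loop in the dual, but bridges lie on no cycle and were already removed. The rest is citation of the polynomial-time submodular-flow algorithm.
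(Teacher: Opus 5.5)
Your proposal is correct and follows essentially the route the paper relies on. The paper gives no proof of its own; it cites this as the classical algorithmic consequence of the weighted Lucchesi--Younger theorem and planar duality, which is exactly your reduction of minimum-weight FAS in each strongly connected plane component to minimum-weight dijoin in the dual, solved via submodular flow (Frank's algorithm, or the ellipsoid method with min-cut separation). One step is stated quickly: passing from directed bonds to arbitrary directed cuts needs the standard fact that every directed cut of a connected digraph contains a directed bond, which holds.
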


To describe our DFVS algorithm, we first consider a variant of the problem in which all positive vertices are ``protected'' in the sense that they cannot be included in the solution, and we show that this variant can be solved in polynomial time.
Formally, for a subset $P \subseteq V(D)$, we say a DFVS $S$ of $D$ is \textit{$P$-avoiding} if $S \cap P = \emptyset$.

\begin{lemma}\label{lem:protected}
Let $D$ be a plane digraph and $P \subseteq V(D)$ be a subset that contains all positive vertices of $D$.
Then one can compute in polynomial time a minimum $P$-avoiding directed feedback vertex set of $D$.
\end{lemma}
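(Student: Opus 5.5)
The plan is to reduce to \textsc{Weighted Feedback Arc Set} on a planar digraph and invoke Theorem~\ref{thm:fas}. First, by Observation~\ref{obs:loops} we may assume $D$ is loopless. We may also discard vertices of in-degree or out-degree zero, since they lie on no directed cycle: they never need to be deleted and can be removed without changing the set of directed cycles. This removal can, however, change $b$ at the remaining vertices. To avoid this subtlety, we instead keep all vertices and treat every vertex $v\notin P$ as follows. If $v$ has in-degree $0$ or out-degree $0$, then no directed cycle passes through $v$, so we make it effectively protected by giving it prohibitive cost (equivalently, adding it to $P$). Every other vertex $v\notin P$ is not positive, so $b(v)\le 1$. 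Since $v$ has both an incoming and an outgoing arc, $b(v)\ge 1$, hence $b(v)=1$. Therefore the incoming arcs at $v$ form one contiguous interval of the cyclic order and the outgoing arcs form the complementary interval.

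Next we build the auxiliary digraph $D'$. Each vertex $v\notin P$ with $b(v)=1$ is split into two vertices $v^-$ and $v^+$. All incoming arcs of $v$ are redirected to end at $v^-$, all outgoing arcs are redirected to start at $v^+$, and a new arc $(v^-,v^+)$ of cost $1$ is added. Every original arc, and every vertex of $P$, receives cost $|V(D)|+1$.

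The first key step is planarity of $D'$. Because the two groups of arcs at $v$ are contiguous, we can draw $v^-$ and $v^+$ in a small disk around $v$. Each keeps its arcs in the inherited cyclic order, and the new arc is drawn through the disk between the two intervals, so no crossings arise. This is the only place where bimodality is used, and where the assumption that $P$ contains all positive vertices matters.

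The second key step is the correspondence between solutions, which we check in both directions. Directed cycles of $D'$ correspond bijectively to directed cycles of $D$: a cycle of $D'$ enters $v^-$ and can only leave through $(v^-,v^+)$, and contracting the split arcs maps it back to a cycle of $D$, and conversely. Hence a $P$-avoiding DFVS $S$ of $D$ yields the feedback arc set $\{(v^-,v^+):v\in S\}$ of $D'$ with cost $|S|$. Conversely, a minimum-cost FAS of $D'$ has cost less than $|V(D)|+1$ whenever a $P$-avoiding DFVS exists, so it consists only of split arcs and maps back to a $P$-avoiding DFVS of equal size. A $P$-avoiding DFVS exists exactly when $D-P$ is acyclic, that is, when every cycle of $D$ meets a non-$P$ vertex; equivalently, when deleting all split arcs leaves $D'$ acyclic. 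So either the optimum cost is at most $|V(D)|$ and we read off the solution, or it exceeds $|V(D)|$ and we report that no $P$-avoiding DFVS exists.

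Finally, we apply Theorem~\ref{thm:fas} to $D'$. The construction is polynomial and the costs have polynomial bit length, so the whole procedure runs in polynomial time.

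The main obstacle I expect is the careful handling of degenerate local structure: vertices with repeated arcs to the same neighbor (parallel arcs), and vertices of non-positive weight whose $b(v)=0$. These must be shown either harmless or forced into $P$, and the planarity claim has to be argued cleanly for the split.
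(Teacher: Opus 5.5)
Your proposal is correct and follows essentially the same route as the paper. Each non-protected vertex (hence bimodal, with $b(v)=1$) is split into $v^-,v^+$ joined by a unit-cost arc, planarity comes from the contiguity of the in- and out-intervals, and the problem becomes weighted planar FAS with prohibitive cost on the original arcs. The differences are minor refinements: you protect sources and sinks instead of deleting them, which sidesteps the paper's unstated fact that deletions cannot increase $b$ at other vertices, and you use the finite cost $|V(D)|+1$ with an explicit feasibility test where the paper uses $\infty$.
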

\begin{proof}
First observe that without loss of generality, we can assume that $b(v) \geq 1$ for all $v \in V(D)$.
Indeed, as long as there is a vertex $v \in V(D)$ with $b(v) = 0$, we can simply remove $v$ and solve the problem instance on $(D - v, P \backslash \{v\})$, because $v$ is either a source or a sink in $D$ and thus no directed cycle of $D$ contains $v$.
As such, we can keep removing vertices $v$ with $b(v) = 0$ from $D$.
This ends up with an instance in which all vertices $v$ satisfy $b(v) \geq 1$, or equivalently, $w(v) \geq 0$.

Let $Q = V(D) \backslash P$.
Since $P$ contains all positive vertices of $D$, we have $w(v) = 0$ for all $v \in Q$.
Thus, the vertices in $Q$ are bimodal.
Consider a vertex $v \in Q$.
%We have $w(v) = 0$ and thus $b(v) = 1$.
%This implies that there are exactly two alternating corners at $v$.
%In other words, the incoming arcs (resp., outgoing arcs) at $v$ are consecutive in the cyclic order of the arcs incident to $v$.
We can split $v$ into two vertices $v^-$ and $v^+$ as follows.
We replace $v$ with $v^-$ and $v^+$ connected with an arc $e_v = (v^-,v^+)$.
Each incoming arc $(u,v)$ is then replaced with $(u,v^-)$, and each outgoing arc $(v,u)$ is then replaced with $(v^+,u)$.
As $v$ is bimodal, the incoming arcs (resp., outgoing arcs) at $v$ are consecutive in the cyclic order, and thus this split operation preserves the planarity of the digraph.
See Figure~\ref{fig:split} for an illustration.
We apply the split operation to every vertex in $Q$, and let $D^*$ be the resulting planar digraph.

\begin{figure}[ht]
\centering
\begin{tikzpicture}[>=Stealth,vertex/.style={circle,fill=black,inner sep=1.6pt},scale=.95]
\begin{scope}[xshift=0cm]
\node[vertex,label=below:$v$] (v) at (0,0) {};
\foreach \y in {-.8,0,.8}{\draw[->] (-1.5,\y)--(v);\draw[->] (v)--(1.5,\y);}
\node at (0,-1.25) {before};
\end{scope}
\node at (2.6,0) {$\Longrightarrow$};
\begin{scope}[xshift=5.4cm]
\node[vertex,label=below:$v^-$] (a) at (-.45,0) {};
\node[vertex,label=below:$v^+$] (b) at (.45,0) {};
\foreach \y in {-.8,0,.8}{\draw[->] (-1.9,\y)--(a);\draw[->] (b)--(1.9,\y);}
\draw[->,thick] (a)--node[above]{$e_v$}(b);
\node at (0,-1.25) {after};
\end{scope}
\end{tikzpicture}
\caption{Splitting a vertex in $Q$ into two while preserving the planarity.}
\label{fig:split}
\end{figure}

We claim that a set $S \subseteq Q$ is a DFVS of $D$ iff $A_S = \{e_v \in E(D^*): v \in S\}$ is an FAS of $D^*$.
To see this, observe that every (directed) cycle in $D$ one-to-one corresponds to a (directed) cycle in $D^*$ by replacing each vertex $v \in Q$ on the cycle with the two vertices $v^-,v^+$ connected by the arc $e_v$.
One can easily verify that this correspondence is one-to-one and satisfies the following: a cycle in $D$ contains a vertex $v \in Q$ $v \in Q$ iff its corresponding cycle in $D^*$ contains the arc $e_v$.
%\todo{are directed minors defined? Cycles obviously preserved? Better: every directed cycle in $D$ corresponds to a directed cycle in $D'$ by replacing each vertex $v$ by $v-$, $v+$, this correspondence is a bijection}
%As such, a (directed) cycle in $D^*$ uniquely corresponds to a (directed) cycle in $D$.
%One can easily verify that this correspondence is one-to-one and satisfies the following condition: a cycle in $D^*$ contains an arc $e_v$ for $v \in Q$ iff its corresponding cycle in $D$ contains the vertex $v$.
This implies that no cycle in $D$ is disjoint from $S$ iff no cycle in $D^*$ is disjoint from $A_S$.

The above observation allows us to reduce the problem of computing a minimum $P$-avoiding DFVS of $D$ to the problem of computing a minimum FAS $A \subseteq \{e_v \in E(D^*): v \in Q\}$ of $D^*$.
Indeed, if we can compute $A$, then the above observation implies that the set $S = \{v \in Q: e_v \in A\}$ is a minimum DFVS of $D$ that is contained in $Q$, i.e., a minimum $P$-avoiding DFVS of $D$.
Note that the latter problem can be further reduced to an instance of \textsc{Weighted Feedback Arc Set} by setting the weights of the edges in $\{e_v \in E(D^*): v \in Q\}$ equal to $1$ and the weights of the other edges equal to $\infty$.
Since $D^*$ is a planar digraph, the minimum FAS $A$ can be computed in $n^{O(1)}$ time by Theorem~\ref{thm:fas}.
This completes the proof of the lemma.
\end{proof}

Based on the above lemma, we can now design our DFVS algorithm parameterized by the number $p$ of positive vertices.
Suppose we want to compute a DFVS of $D$ with size at most $k$.
Let $V^+ = \{v \in V(D): w(v) > 0\}$ consist of the positive vertices of $D$.
We first guess the subset $R \subseteq V^+$ of vertices that are in the desired DFVS.
Since the DFVS we look for has size at most $k$ and $|V^+| = p$, the total number of guesses is $\sum_{h=0}^{\min\{p,k\}} \binom{p}{h}$.
For each guess $R \subseteq V^+$, we apply Lemma~\ref{lem:protected} to compute a minimum $P$-avoiding DFVS $R'$ of $D-R$ for $P = V^+ \backslash R$.
Clearly, $R \cup R'$ is a DFVS of $D$.
Note that if there exists a DFVS $S$ of $D$ with $|S| \leq k$ and $S \cap V^+ = R$, then $S$ is $P$-avoiding and hence $|R \cup R'| \leq |S| \leq k$.
Therefore, as long as the guess is correct, we find a DFVS of $D$ with size at most $k$.
Since the algorithm of Lemma~\ref{lem:protected} takes $n^{O(1)}$ time, the total running time of our algorithm is $\sum_{h=0}^{\min\{p,k\}} \binom{p}{h} \cdot n^{O(1)} \leq 2^p \cdot n^{O(1)}$.

\begin{lemma}\label{lem:exact}
Let $D$ be a plane digraph with $p$ positive vertices.
There exists a deterministic algorithm that can find a directed feedback vertex set of $D$ with size at most $k$ (if it exists) in $\sum_{h=0}^{\min\{p,k\}} \binom{p}{h} \cdot n^{O(1)}$ time and polynomial space.
\end{lemma}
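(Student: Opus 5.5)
The plan is to formalize the guessing procedure described just before the statement and verify correctness, running time, and space separately.

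\emph{Algorithm.} First compute $b(v)$, hence $\wt(v)$, for every vertex from the fixed embedding. This is linear time: walk around the cyclic order at each vertex and count alternating corners. Let $V^+=\{v:\wt(v)>0\}$, so $|V^+|=p$. Enumerate all $R\subseteq V^+$ with $|R|\le \min\{p,k\}$ one at a time, for instance in lexicographic order. For each $R$, set $P=V^+\setminus R$ and invoke Lemma~\ref{lem:protected} on $(D-R,P)$, with $D-R$ carrying the inherited embedding. If the returned set $R'$ satisfies $|R\cup R'|\le k$, output $R\cup R'$. If no guess succeeds, report that no solution exists.

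\emph{Applicability of Lemma~\ref{lem:protected}.} This is the one step needing care. We must check that $P$ contains all positive vertices of $D-R$. Deleting vertices only removes arcs around the surviving vertices $v$, and so the number of alternating corners at $v$ cannot increase; I would argue $b_{D-R}(v)\le b_D(v)$. Removing an arc from the cyclic order merges two corners into one. The alternating switches in the new cyclic sequence are then a subset-count of the old ones: the number of sign changes of a cyclic $\pm$ sequence does not increase when an entry is deleted. Hence every positive vertex of $D-R$ was already positive in $D$ and is not in $R$, so it lies in $P$. Also, $D-R$ may be disconnected or have isolated vertices. Lemma~\ref{lem:protected} is stated for general plane digraphs, and its proof only uses local bimodality for the splitting, so I would apply it directly.

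\emph{Correctness.} Every output is a DFVS of size at most $k$: $R'$ is a DFVS of $D-R$, so $D-(R\cup R')$ is acyclic. Conversely, suppose a DFVS $S$ with $|S|\le k$ exists, and consider the guess $R=S\cap V^+$. Then $|R|\le\min\{p,k\}$, and $S\setminus R$ is a $P$-avoiding DFVS of $D-R$. By minimality, $|R'|\le |S\setminus R|$, hence $|R\cup R'|\le |S|\le k$.

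\emph{Complexity.} The number of guesses is $\sum_{h=0}^{\min\{p,k\}}\binom ph$, and each guess costs $n^{O(1)}$ by Lemma~\ref{lem:protected}. The subsets are generated iteratively and only the current one is stored, so space stays polynomial. Lemma~\ref{lem:protected} itself is polynomial time, hence polynomial space.
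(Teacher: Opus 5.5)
Your proposal is correct and follows the paper's argument. You guess $R=S\cap V^+$ among the $\sum_{h=0}^{\min\{p,k\}}\binom ph$ subsets of positive vertices, and complete each guess with a minimum $(V^+\setminus R)$-avoiding DFVS of $D-R$ via Lemma~\ref{lem:protected}. Your extra check that $b$ cannot increase when arcs are deleted, so $V^+\setminus R$ still contains every positive vertex of $D-R$, is a step the paper uses only tacitly here, and it is a worthwhile addition.
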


% ---------------------------------------------------------------------
% Randomized algorithms: replaces the old sections
%   "8^k randomized algorithm"  and
%   "A Faster Algorithm: Sampling with polynomial-time completion".
% Uses the labels of structural_tools.tex (lem:weight, lem:alt,
% lem:protected, lem:exact). Kept labels: sec:baseline, thm:baseline,
% lem:success, lem:time, sec:sampling-completion,
% lem:two-parameter-sample, lem:sampling-monotonicity, lem:R-vs-Q,
% lem:uniform. New: sec:randomized, lem:approx-sampling, thm:main,
% eq:rec-P, eq:rec-R, eq:rec-Q, lem:trial.
% Macros assumed from the preamble: \wt, \Oh, \Exact, \FAIL, \Solve.
% ---------------------------------------------------------------------
\section{Randomized algorithms}
\label{sec:randomized}

In this section, we give two randomized algorithms for \textsc{Planar DFVS}.
Our algorithms have one-sided error: they return either a DFVS (with size at most $k$) or $\FAIL$, and may return $\FAIL$ even when a solution exists.
%Both repeatedly sample a vertex with probability proportional to vertex weights $w$ and delete it. By Corollary~\ref{cor:sample}, each sample hits a solution vertex with a probability that we can bound from below. The first algorithm switches to enumeration when the alternation excess is small, giving running time $\OO(6.75^k n^{\OO(1)})$. The second instead tries, before each sample, to find a solution consisting only of bimodal vertices by reducing to polynomial-time solvable \textsc{Planar DFAS} (Lemma~\ref{lem:exact}). This improves the running time to $(2+\sqrt5)^k n^{\OO(1)}$.

Let $D$ be the input planar digraph and $k$ be the parameter.
Our algorithms first delete all vertices carrying self-loops and reduce the budget $k$ accordingly; if $k$ becomes negative, we return $\FAIL$.
As such, we can assume $D$ does not contain a self-loop.
%The empty set is a valid solution when the digraph is acyclic and is distinct from $\FAIL$.
%Both algorithms use the following sampling procedure, which approximates proportional sampling using a bounded number of random bits.
A common step used in both algorithms is the following (weighted) random sample procedure.
Let $\rho: V(D) \rightarrow \mathbb{R}_{\geq 0}$ be a function and $R = \sum_{v \in V(D)} \rho(v)$.
Consider a randomly sampled vertex $v^* \in V(D)$, where the sampling probability of each $v \in V(D)$ is equal to $\frac{\rho(v)}{R}$.
In this case, we say $v^*$ is sampled from $V(D)$ with respect to $\rho$.

\subsection{A simple \texorpdfstring{$6.75^k$}{6.75^k}-time algorithm}
\label{sec:baseline}
Our algorithm is recursive, and roughly works as follows.
In each call, let $k$ be the current budget and $L = w(V(D))$ as before.
We distinguish two cases:
\begin{itemize}
    \item If $L\le3k$, $D$ has at most $3k$ positive vertices. In this case, we solve the instance by the algorithm of Lemma~\ref{lem:exact}.
    
    \item If $L>3k$, we randomly sample a vertex $v \in V(D)$ with respect to $w$.
    By Corollary~\ref{cor:sample}, $v$ is contained in an optimal solution with constant probability.
    We then include $v$ in the solution and recursively solve the problem on $D - v$ with budget $k-1$.
    %we make six independent attempts. Each attempt samples a vertex proportionally to its alternation weight, deletes it, and recursively solves the resulting instance with budget $j-1$. We return a solution as soon as an attempt succeeds.
\end{itemize}
%We will show that, whenever a solution within the remaining budget exists, each sample hits a vertex of that solution with probability at least $1/6$, including the loss from approximate sampling. This allows us to bound the success probability of the recursive procedure.

%Let $k$ be the budget after deleting vertices carrying loops; it remains fixed throughout the recursion. We initially call $\Solve(D,k)$.

Formally, the core of our algorithm is the procedure $\Solve(D,k)$ presented below.
Here $\Exact$ denotes the algorithm of Lemma~\ref{lem:exact}.

\begin{mdframed}[linewidth=.5pt,innertopmargin=8pt,innerbottommargin=8pt]
\noindent\textbf{Procedure $\Solve(D,k)$.}
\begin{enumerate}[label=\arabic*.]
\item \textbf{Preprocessing.}
If $D$ is acyclic, return $\emptyset$.
If $k=0$, return $\FAIL$.
Remove from $D$ all vertices/arcs that do not belong to any directed cycle.

\item Compute the weight $\wt(v) = b(v)-1$ for each $v \in V(D)$ and $L = w(V(D))$.
%their sum $L$, and the set $B$ of nonbimodal vertices.

\item If $L\le3k$, return $\Exact(D,k)$.

\item Make $6$ independent attempts. In each attempt:
\begin{enumerate}[label=(\alph*)]
    \item Sample a vertex $v \in V(D)$ with respect to $w$.
    %using Lemma~\ref{lem:approx-sampling} with weights $\wt$ and $N=4k$.
    \item Recursively call $\Solve(D-v,k-1)$.
    \item If the recursive call returns $Z \subseteq V(D-v)$, then return $Z\cup\{v\}$.
\end{enumerate}

\item Return $\FAIL$.
\end{enumerate}
\end{mdframed}

\medskip

%We now analyze $\Solve$ and show that repeating it independently $\OO(k+1)$ times yields the following result.

\begin{theorem}\label{thm:baseline}
There is a randomized algorithm for \textsc{Planar DFVS}
running in time $\OO(6.75^k n^{\OO(1)})$ and polynomial space.
It always returns either a DFVS of size at most $k$ or
$\FAIL$, and returns a solution with probability at least
$\frac{1}{2}$ whenever one exists.
\end{theorem}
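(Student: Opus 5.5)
Three things need proof: correctness (one-sided error), the success probability, and the running time of a run of $\Solve(D,k)$. The algorithm then repeats $\Solve$ a constant number of times and returns the first non-$\FAIL$ output.

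\emph{Correctness.} This goes by induction on $k$. If $\Solve$ returns a set, it came from one of three places. It is $\emptyset$ on an acyclic digraph. Or it comes from $\Exact$, which by Lemma~\ref{lem:exact} returns only a DFVS of size at most $k$. Or it is $Z\cup\{v\}$, where $Z$ is a DFVS of $D-v$ with $|Z|\le k-1$ by induction, so $Z\cup\{v\}$ is a DFVS of $D$ of size at most $k$. Preprocessing only removes vertices and arcs lying on no directed cycle, so it preserves both the set of DFVSs and the answer. One caveat: Lemma~\ref{lem-weight} was proved for a weakly connected digraph with at least one arc. After preprocessing, every vertex has $b(v)\ge1$. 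I would apply the lemma to each weakly connected component separately, restricting $X$ to that component. This gives $|X\cap C|+w(X\cap C)\ge L_C/2+1$ for each component $C$, and summing yields the global inequality, so Corollary~\ref{cor:sample} still applies.

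\emph{Success probability.} I would show by induction on $k$ that if a solution exists, $\Solve(D,k)$ succeeds with probability at least $1/2$. If $L\le 3k$, $\Exact$ is deterministic and succeeds. If $L>3k$, fix a target $X$. By Corollary~\ref{cor:sample}, each attempt samples some $v\in X$ with probability more than $1/6$. In that case $X\setminus\{v\}$ is a solution of $D-v$ with budget $k-1$, so by induction the recursive call succeeds with probability at least $1/2$. Each attempt therefore succeeds with probability at least $1/12$. The six attempts are independent, so the failure probability is at most $(11/12)^6$. This is roughly $0.59$, which exceeds $1/2$, so the induction as stated does not close. This is the obstacle. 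I would fix it by proving a weaker invariant $p_k\ge c/(k+1)$: the recurrence $p_k\ge 1-(1-p_{k-1}/6)^6$ gives $p_k\gtrsim p_{k-1}(1-\tfrac{5}{12}p_{k-1})$, which sustains $p_k\ge c/(k+1)$ for a suitable constant $c$. Repeating the top-level call $\OO(k+1)$ times then pushes the success probability to $1/2$. (Alternatively, one can increase the number of attempts, but that changes the time bound.)

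\emph{Running time.} Let $T(k)$ be the number of leaves of the recursion tree. Each node does polynomial work, and each leaf is either trivial or an $\Exact$ call on an instance with at most $L\le 3k'$ positive vertices, since every positive vertex has weight at least $1$. A leaf reached at depth $d=k-k'$ costs $\binom{3k'}{\le k'}n^{\OO(1)}$, which is at most $(27/4)^{k'}n^{\OO(1)}$ by the entropy bound $\binom{3k}{k}\le (3^3/2^2)^k$. There are at most $6^{d}$ nodes at depth $d$, so the total is $\sum_{d}6^{d}\,6.75^{k-d}n^{\OO(1)}\le \OO(6.75^k n^{\OO(1)})$, a geometric series since $6<6.75$. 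The extra factor $\OO(k)$ from repetition is absorbed into $n^{\OO(1)}$, or handled by using $6.75^k$ as a loose bound. The recursion has depth at most $k$ and $\Exact$ uses polynomial space, so the whole algorithm uses polynomial space.
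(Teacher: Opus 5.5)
Your proposal is correct and takes essentially the same route as the paper. The paper also proves by induction that the success probability is at least $\frac{1}{k+1}$, using $1-(1-\frac{1}{6k})^6 \ge \frac1k-\frac{5}{12k^2}\ge\frac1{k+1}$, which is your recurrence with $c=1$. It then repeats $\Theta(k)$ times and bounds the running time by $\sum_{i\le k}\binom{3k}{i}\le 6.75^k$ together with branching factor $6<6.75$; your per-depth count is just the unrolled form of its recurrence. Change your opening sentence from ``a constant number of times'' to $O(k+1)$ repetitions so it matches your own analysis. Applying Lemma~\ref{lem-weight} per component is a correct detail that the paper leaves implicit.
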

\begin{proof}
We analyze the recursive procedure $\Solve(D,k)$.
%, which enumerates when $L\le3j$ and otherwise makes six independent sampling attempts. The initial budget $k$ remains fixed throughout the recursion.
Clearly, if $D$ does not have a DFVS of size at most $k$, then $\Solve(D,k)$ returns $\FAIL$.
Assume $D$ has a DFVS $X$ with $|X| \leq k$.
By construction, if $\Solve(D,k)$ returns a subset $X'$ of $V(D)$, then $X'$ is a DFVS of $D$ with size at most $k$.
So it suffices to bound the success probability of $\Solve(D,k)$, i.e., the probability that $\Solve(D,k)$ does not return $\FAIL$.
We shall show that the success probability of $\Solve(D,k)$ is at least $\frac{1}{k+1}$.
By induction, we can assume that for every $v \in X$, the success probability of $\Solve(D-v,k-1)$ is at least $\frac{1}{k}$.

In the procedure $\Solve(D,k)$, if $L \leq 3k$, we call $\Exact(D,k)$ and the success probability of $\Solve(D,k)$ is $1$ in this case.
Otherwise, we make $6$ independent attempts.
In each attempt, we sample a vertex $v \in V(D)$ with respect to $w$.
By Corollary~\ref{cor:sample}, $\Pr[v \in X] \geq \frac{1}{6}$.
Furthermore, if $v \in X$, then $\Solve(D-v,k-1)$ is successful with probability at least $\frac{1}{k}$.
As such, each attempt is successful with probability at least $\frac{1}{6k}$.
The overall success probability is then 
\[
    1-\left(1-\frac1{6k}\right)^6
    \ge \frac1k-\frac5{12k^2}
    \ge \frac1{k+1}.
\]

Based on the above discussion, making $\Theta(k)$ independent calls of $\Solve(D,k)$ gives us a success probability at least $\frac{1}{2}$.
To analyze the running time, observe that in every call of the sub-routine \textsc{Exact}, the number of positive vertices is at most $3k$.
As such, by Lemma~\ref{lem:exact}, the time cost of each call is bounded by
\[
\begin{aligned}
    \sum_{i=0}^{k}\binom{3k}{i}\,n^{\OO(1)}
    &\le
    2^k\left(1+\frac12\right)^{3k}n^{\OO(1)}\\
    &=6.75^k n^{\OO(1)}.
\end{aligned}
\]
Hence the worst-case running time of $\Solve(D,k)$ satisfies the recurrence
\[
    T(k)\le
    \max\left\{
        6.75^k n^{\OO(1)},
        \;6T(k-1)+n^{\OO(1)}
    \right\},
    \qquad T(0)=n^{\OO(1)}.
\]
which solves to $T(k)=6.75^k n^{\OO(1)}$.
Finally, $\Theta(k)$ calls of $\Solve(D,k)$ results in an extra $k$ factor in the running time.
Since $k \leq n$, the overall time complexity is still $6.75^k n^{\OO(1)}$.
\end{proof}

\subsection{A \texorpdfstring{$(2+\sqrt5)^k$}{(2+sqrt5)^k}-time algorithm}
\label{sec:sampling-completion}

We improve the previous algorithm by trying, before every
sampling step, to find a solution that contains only the current
bimodal vertices. By Lemma~\ref{lem:exact}, this can be done in polynomial time through a reduction to
\textsc{Planar DFAS}. If such a solution exists within the
remaining budget, we return it together with the vertices
already selected. 
Otherwise, we do the sampling step as usual.
The key is that we only need to sample the nonbimodal part
of a solution. Moreover, deletions may turn further solution
vertices bimodal, reducing the number of successful samples
needed. We capture this by tracking both the remaining
budget $j$ and the number $s$ of nonbimodal vertices in a
fixed solution. Each trial takes polynomial time; we will
show that $\OO((2+\sqrt5)^k)$ independent trials suffice
for constant success probability.

%Let $K$ be the budget after deleting vertices carrying loops. Each trial starts from a fresh copy of this instance and uses the following procedure.

Formally, the core of this algorithm is the procedure $\Solve'(D,k)$ presented below.
Here $\Exact'$ denotes the algorithm of Lemma~\ref{lem:protected}.

\begin{mdframed}[linewidth=.5pt,innertopmargin=8pt,innerbottommargin=8pt]
\noindent\textbf{Procedure $\Solve'(D,k)$.}
\begin{enumerate}[label=\arabic*.]
%\item Set $A:=\emptyset$ and $j:=K$.

\item \textbf{Preprocessing.}
If $D$ is acyclic, return $\emptyset$.
If $k=0$, return $\FAIL$.
Remove from $D$ all vertices/arcs that do not belong to any directed cycle.

\item Compute the weight $\wt(v) = b(v)-1$ for each $v \in V(D)$ and $L = w(V(D))$.
%their sum $L$, and the set $B$ of nonbimodal vertices.

\item Call $\Exact'(D,P)$ for $P = \{v \in V(D): w(v) > 0\}$.
If $\Exact'(D,P)$ returns a set $Z \subseteq V(D) \backslash P$ with $|Z| \leq k$, then return $Z$.

\item Sample a vertex $v \in V(D)$ with respect to $w$.
    %using Lemma~\ref{lem:approx-sampling} with weights $\wt$ and $N=4k$.
\item Recursively call $\Solve'(D-v,k-1)$.

\item If the recursive call returns $Z \subseteq V(D-v)$, then return $Z\cup\{v\}$.

\item Return $\FAIL$.
\end{enumerate}
\end{mdframed}

%Here $\operatorname{\Complete}$ is the polynomial-time algorithm of Lemma~\ref{lem:exact}, implemented through \textsc{Planar DFAS}. Vertices of $B$ are undeletable only during the call in Step~4; the sets $B$ and $U$ are recomputed after each deletion. We now show that repeating this polynomial-time trial sufficiently many times gives the following result.

\medskip

\begin{theorem}\label{thm:main}
There is a randomized algorithm for \textsc{Planar Directed Feedback Vertex Set}
with running time $(2+\sqrt5)^k n^{\OO(1)}$ and polynomial
space. It always returns either a DFVS of size at most $k$
or $\FAIL$, and returns a solution with probability at least
$\frac{1}{2}$ whenever one exists.
\end{theorem}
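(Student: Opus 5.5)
The plan is to analyse a single call of $\Solve'$ as one polynomial-time trial and bound its success probability. Fix a target $X$ with $|X|\le k$ and let $s$ be the number of positive vertices of $X$ in the digraph after preprocessing. I will show that the trial succeeds with probability at least $P(k,s)$, a function determined by a two-parameter recurrence. I will then show $\min_s P(k,s)\ge \frac12(2+\sqrt5)^{-(k+1)}$ and repeat the trial $\lceil 2(2+\sqrt5)^{k+1}\rceil$ times independently. This gives failure probability at most $e^{-1}<\frac12$, polynomial space, and total time $(2+\sqrt5)^k n^{\OO(1)}$. One-sided error is immediate: every returned set is $Z\cup\{v\}$ with $Z$ a DFVS of $D-v$ within budget, or the output of step 1 or step 3.

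\textbf{Step 1: preliminary facts.} First, preprocessing does not destroy $X$. Removing vertices and arcs that lie on no directed cycle leaves $X\cap V(D)$ a DFVS of the same size. Afterwards every vertex has an in-arc and an out-arc, so $b(v)\ge1$ and $w(v)\ge0$.

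Second, weights are monotone. Deleting vertices or arcs only deletes entries from the cyclic order around a vertex $u$, and this cannot increase the number of in/out switches. Hence $b(u)$ does not increase, and a bimodal vertex stays bimodal (or disappears) in every later call.

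Third, Lemma~\ref{lem-weight} extends to disconnected $D$. It is stated for a weakly connected $D$, so I apply it to each weakly connected component $C$, using the DFVS $X\cap V(C)$. Summing over components gives $L\le 2(|X|+w(X))-2\le 2(k+w(X))$.

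\textbf{Step 2: the recurrence.} If $s=0$, then $X$ is $P$-avoiding for $P=\{v:w(v)>0\}$. Hence step 3 (Lemma~\ref{lem:protected}) returns a set of size at most $|X|\le k$, and the trial succeeds with probability $1$.

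If $s\ge1$, then $w(X)\ge s$. The sampled vertex lies in $X$ with probability
\[
\frac{w(X)}{L}\ \ge\ \frac{w(X)}{2(k+w(X))}\ \ge\ \frac{s}{2(k+s)},
\]
using that $x\mapsto x/(2(k+x))$ is increasing. Note $L>0$ here since $s\ge1$. When $v\in X$ is sampled, $X\setminus\{v\}$ is a target for $D-v$ with budget $k-1$. By the monotonicity fact, its number of positive vertices is some $s'\le s-1$. Writing $P^*(k,s)=\min_{s'\le s}P(k,s')$, this yields
\[
P(k,0)=1,\qquad P(k,s)\ \ge\ \frac{s}{2(k+s)}\,P^*(k-1,s-1),
\]
where $s\le k$ always holds. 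Unrolling along the worst path $s'=s-1$ gives the candidate closed form
\[
P(k,s)\ \ge\ \prod_{t=1}^{s}\frac{t}{2(k-s+2t)}.
\]
To handle the minimum over $s'$, I will check that this product is nonincreasing in $s$ for fixed $k$. If that check fails, I will instead use a monotone majorant.

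\textbf{Step 3: the main obstacle.} The main obstacle is the estimate
\[
\max_{0\le s\le k}\ 2^s\prod_{t=1}^{s}\frac{k-s+2t}{t}\ \le\ 2\,(2+\sqrt5)^{k+1}.
\]
Crude bounds such as $k-s+2t\le 2(k-s+t)$ give only about $4^s\binom ks\le 5^k$, which is too weak.

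I would first try a direct induction: posit $P(k,s)\ge \frac12\,\mu^{-k}\lambda^{-s}$-type bounds, or more precisely $P(k,s)\ge c\,g(k,s)$ for a suitable $g$. Then $\frac{s}{2(k+s)}$ must absorb the ratio $g(k,s)/g(k-1,s-1)$ uniformly in $s$. Optimizing this ratio at the critical proportion $s=\alpha k$ should produce the quadratic $x^2-4x-1=0$, whose positive root is $2+\sqrt5$.

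If the clean induction does not close, I will fall back on a saddle-point computation. Take logarithms, approximate the sum by $k\int_0^\alpha$, maximize over $\alpha$, and verify that the maximum exponent is $\ln(2+\sqrt5)$. Finally, control the lower-order factor (giving the $\frac12$ and the $+1$ in the exponent) by an explicit comparison for small $k$.
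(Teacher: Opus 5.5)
Your outline follows the paper's route: the bound $\Pr[v\in X]\ge\frac{s}{2(k+s)}$ from Lemma~\ref{lem-weight}, the two-parameter recurrence, and the unrolled product $\tau(k,s)=\prod_{t=1}^{s}\frac{2(k-s+2t)}{t}$ are all as in the paper.

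The gap is the step you call the main obstacle. The estimate $\max_s\tau(k,s)\le 2(2+\sqrt5)^{k+1}$ is never established, and neither of your plans closes it as stated.
\begin{itemize}
\item An induction on the sampling recurrence with a bound like $\mu^{-k}\lambda^{-s}$ would need $\frac{s}{2(k+s)}\ge\frac{1}{\mu\lambda}$ for every $1\le s\le k$. This fails at $s=1$ for large $k$. In fact, any bound of product form $A(k)B(s)$ incurs a factorial loss there.
\item The saddle-point sketch gives the right exponent heuristically, but it does not produce an explicit inequality.
\end{itemize}

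The missing idea is that $\tau$ is a binomial coefficient. If $k-s=2r$ is even, then $k-s+2t=2(r+t)$, so $\tau(k,s)=4^s\binom{(k+s)/2}{s}$. For $\varphi=2+\sqrt5$ one has $1+4\varphi=\varphi^2$. The binomial theorem then gives $\binom{a}{b}(4\varphi)^b\le(1+4\varphi)^a=\varphi^{2a}$, and hence $\tau(k,s)\le\varphi^{k+s}/\varphi^{s}=\varphi^k$. This is exactly the paper's argument.

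Equivalently, Pascal's rule gives $\tau(k,s)=\tau(k-2,s)+4\tau(k-1,s-1)$, where the first term is $0$ when $s=k$. The induction you were looking for is then $\tau(k,s)\le\varphi^{k-2}+4\varphi^{k-1}=\varphi^k$. This recurrence, rather than a ratio optimisation at a critical proportion, is where $x^2-4x-1=0$ comes from. If $k-s$ is odd, use that $\tau$ is increasing in $k$: $\tau(k,s)\le\tau(k+1,s)\le\varphi^{k+1}$. This comfortably gives your target $\frac12\varphi^{-(k+1)}$.

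Your planned check that the product is nonincreasing in $s$ will fail. Indeed $\tau(k,k-2)=(k-1)4^{k-2}$ exceeds $\tau(k,k)=4^k$ once $k\ge 18$.

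Your monotone-majorant fallback does work. Set $\tau^*(k,s)=\max_{s'\le s}\tau(k,s')$. For every $s''\le s-1$,
\[
\frac{2(k+s)}{s}\,\tau(k-1,s'')\le\frac{2(k+s''+1)}{s''+1}\,\tau(k-1,s'')=\tau(k,s''+1)\le\tau^*(k,s),
\]
because $(k+x)/x$ decreases in $x$. Hence $P(k,s)\ge 1/\tau^*(k,s)$. Here you are more careful than the paper, which invokes $\mu(k-1,p-1)$ without addressing that the number of positive target vertices can drop by more than one after a deletion.

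The rest of your outline is fine: the per-component application of Lemma~\ref{lem-weight}, the monotonicity of $b$ under deletions, the one-sided error, and the repetition count.
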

\begin{proof}
We analyze the procedure $\Solve'(D,k)$.
As in the proof of Theorem~\ref{thm:baseline}, it suffices to bound its success probability.
Let $\mu(k,p)$ be the (minimum) success probability of a call $\Solve'(D,k)$ satisfying that there exists a DFVS of $D$ with size at most $k$ that contains $p$ positive vertices of $D$.
Clearly, $\mu(k,0) = 1$, because when there is a size-$k$ DFVS of $D$ containing no positive vertices, then Step~3 can successfully find such a DFVS.

Next, we show that $\mu(k,p) \geq \frac{p}{2(k+p)} \cdot \mu(k-1,p-1)$.
It suffices to consider the case where $\Solve'(D,k)$ fails to return a set in Step~3.
Let $P = \{v \in V(D): w(v) > 0\}$ and $X$ be a DFVS of $D$ satisfying $|X| \leq k$ and $|X \cap P| = p$.
The probability that the sampled vertex $v$ lies in $X \cap P$ is equal to $\frac{w(X \cap P)}{L}$.
Note that $w(X \cap P) = w(X)$.
By Lemma~\ref{lem-weight}, $L < 2(|X|+w(X)) \leq 2(k+w(X))$.
Therefore, we have
\begin{equation*}
    \frac{w(X \cap P)}{L} = \frac{w(X)}{L} > \frac{w(X)}{2(k+w(X))} \geq \frac{p}{2(k+p)},
\end{equation*}
where the last inequality follows from the fact $w(X) = w(X \cap P) \geq |X \cap P| \geq p$.
As such, $\Pr[v \in X \cap P] > \frac{p}{2(k+p)}$.
Furthermore, if $v \in X \cap P$, then $\Solve'(D-v,k-1)$ has success probability at least $\mu(k-1,p-1)$.
So the inequality $\mu(k,p) \geq \frac{p}{2(k+p)} \cdot \mu(k-1,p-1)$ follows.

It now suffices to solve the recurrence for $\mu(k,p)$.
Note that we always have $p \leq k$.
Define $\tau(k,p) = \frac{2(k+p)}{p} \cdot \tau(k-1,p-1)$ and $\tau(k,0)=1$.
By construction, $\mu(k,p) \geq \frac{1}{\tau(k,p)}$.
Without loss of generality, assume $k+p$ is even.
Then $\tau(k,p) = 4^p \binom{(k+p)/2}{p} \leq 4^k \binom{(k+p)/2}{p}$.
By the binomial theorem, we have $\binom{a}{b} \leq \frac{\varphi^{2a}}{(4\varphi)^b}$ for $\varphi = 2+\sqrt{5}$.
As such, $\binom{(k+p)/2}{p} \leq \frac{\varphi^{k+p}}{(4\varphi)^p} = \frac{\varphi^k}{4^p}$, which implies $\tau(k,p) \leq \varphi^k$ and $\mu(k,p) \geq \varphi^{-k}$.

Based on the above discussion, making $\Theta(\varphi^k)$ independent calls of $\Solve'(D,k)$ gives us a success probability at least $\frac{1}{2}$.
Note that $\Solve'(D,k)$ runs in $n^{O(1)}$ time.
So the overall time complexity is $\phi^k n^{O(1)} = (2+\sqrt5)^k n^{O(1)}$.
\end{proof}

% deterministic_algorithms_clean.tex
% ---------------------------------------------------------------------
% Deterministic algorithms for Planar DFVS.
% This version contains only the two proved bounds:
%   (i)  multilayer iterative compression: 13.279^k n^{\OO(1)}, and
%   (ii) planar-incidence small-face branching: 8.030^k n^{\OO(1)}.
%
% Structural labels used from the preceding sections:
%   lem:bipartite, lem:euler, lem:weight, lem:protected, lem:exact,
%   lem:sampling-monotonicity, sec:bipartite, sec:face-parameters.
% Macros assumed: \wt, \OO, \NestedSolve, \Hbin.
% ---------------------------------------------------------------------

\section{Deterministic algorithms}
\label{sec:deterministic}

\label{sec:deterministic-adaptive}

In this section we give a deterministic algorithm for
\textsc{Planar DFVS}. A natural approach is to find a short
directed cycle and branch on deleting one of its vertices.
However, even a shortest directed cycle may have arbitrarily
many vertices, so this does not give a bounded number of
branches. Instead, we look for a circular face whose boundary
contains at most four deletable nonbimodal vertices, while
allowing arbitrarily many bimodal vertices. We branch on these
nonbimodal vertices and postpone the choice of bimodal solution
vertices. Some vertices become \emph{undeletable} during the
recursion; every solution considered in that branch must avoid
them. We will show that either a suitable circular face exists
or we can finish by enumerating the nonbimodal part of the
solution and finding the remaining bimodal part in polynomial
time for each guess.

Suppose that an unmarked circular face $f$ has distinct deletable
nonbimodal vertices $v_1,\ldots,v_t$ on its boundary, where
$t\le4$. We create the following branches:
\begin{itemize}
    \item For each $i\in\{1,\ldots,t\}$, delete $v_i$, decrease
    the remaining budget by one, and declare
    $v_1,\ldots,v_{i-1}$ undeletable.
    \item Declare all of $v_1,\ldots,v_t$ undeletable and
    \emph{mark} the face $f$.
\end{itemize}
Every solution consistent with the current undeletable vertices
is preserved in one of these branches. If it contains a vertex
among $v_1,\ldots,v_t$, it is preserved in the branch corresponding
to the first such vertex. Otherwise, it is preserved in the
marking branch. In that branch, the solution must hit the boundary
of $f$ at a deletable bimodal vertex, since $f$ is circular and
all its nonbimodal vertices are now undeletable.

We now describe the algorithm formally. A recursive instance
is a tuple $(D,j,P,\mathcal M)$, where $D$ is the current plane
digraph, $j$ is the remaining budget, $P$ is the set of
undeletable vertices, and $\mathcal M$ is the set of marked
circular faces. For a set $U$ of bimodal vertices, write
$\operatorname{Complete}(D,U,j)$ for the polynomial-time
routine of Lemma~\ref{lem:protected}, which finds a DFVS of size
at most $j$ contained in $U$, or returns $\FAIL$.
As before, loops are handled before the initial call
$\operatorname{DetSolve}(D,k,\emptyset,\emptyset)$.

\begin{mdframed}[linewidth=.5pt,innertopmargin=8pt,innerbottommargin=8pt]
\noindent\textbf{Procedure
$\operatorname{DetSolve}(D,j,P,\mathcal M)$.}
\begin{enumerate}[label=\arabic*.]
\item \textbf{Preprocessing.}
If $j<0$, return $\FAIL$.
If $D$ is acyclic, return $\emptyset$.
If $j=0$, return $\FAIL$.
Discard all vertices and arcs that belong to no directed cycle,
and restrict $P$ to the remaining vertices.

\item Compute
\[
    B:=\{v\in V(D)\setminus P:b_D(v)>1\},
    \qquad q:=|B|,
    \qquad m:=|\mathcal M|.
\]
If $m>2j$, return $\FAIL$. Set
\[
    h:=j-\left\lceil\frac m2\right\rceil.
\]

\item If $h=0$ or $q<3m$, set
$U:=V(D)\setminus(P\cup B)$.
For every $Y\subseteq B$ with $|Y|\le h$, call
\[
    \operatorname{Complete}(D-Y,U,j-|Y|).
\]
If a call returns a set $Z$, return $Y\cup Z$.
If every call fails, return $\FAIL$.

\item Find an unmarked circular face $f$ containing at most
four distinct vertices of $B$, and list these vertices as
$v_1,\ldots,v_t$, where $t\le4$.

\item For each $i\in\{1,\ldots,t\}$, call
\[
    \operatorname{DetSolve}
    \bigl(D-v_i,\,
          j-1,\,
          P\cup\{v_1,\ldots,v_{i-1}\},\,
          \mathcal M\bigr).
\]
If a call returns a set $Z$, return $Z\cup\{v_i\}$.

\item Return the result of
\[
    \operatorname{DetSolve}
    \bigl(D,\,
          j,\,
          P\cup\{v_1,\ldots,v_t\},\,
          \mathcal M\cup\{f\}\bigr).
\]
\end{enumerate}
\end{mdframed}

\begin{figure}[b]
\centering
\begin{tikzpicture}[
    vertex/.style={circle,fill=black,inner sep=1.6pt},
    arc/.style={-stealth,thick},
    discarded/.style={arc,dashed},
    every label/.style={font=\small}
]
    % Before preprocessing.
    \node[vertex,label=below left:$a$] (a) at (0,0) {};
    \node[vertex,label=below left:$b$] (b) at (2.8,0) {};
    \node[vertex,label=above:$c$] (c) at (1.4,2.2) {};
    \node[vertex,label=below:$x$] (x) at (1.4,1.05) {};
    \node[vertex,label=right:$y$] (y) at (3.55,-.55) {};

    \draw[arc] (a) -- (b);
    \draw[arc] (b) -- (c);
    \draw[arc] (c) -- (a);
    \draw[discarded] (c) -- (x);
    \draw[discarded] (b) -- (y);

    \node[font=\small] at (1.55,-1.15)
        {Before preprocessing};

    % Transition.
    \node at (4.75,1.05) {$\Longrightarrow$};

    % After preprocessing.
    \begin{scope}[xshift=6.2cm]
        \node[vertex,label=below left:$a$] (aa) at (0,0) {};
        \node[vertex,label=below right:$b$] (bb) at (2.8,0) {};
        \node[vertex,label=above:$c$] (cc) at (1.4,2.2) {};

        \draw[arc] (aa) -- (bb);
        \draw[arc] (bb) -- (cc);
        \draw[arc] (cc) -- (aa);

        \node[font=\small] at (1.4,-1.15)
            {After preprocessing};
    \end{scope}
\end{tikzpicture}
\caption{The dashed arcs belong to no directed cycle.
Before preprocessing, each of the two facial boundary walks
traverses a pendant arc in both directions and therefore the face is
not circular. After preprocessing, the face becomes circular.}
\label{fig:det-preprocessing}
\end{figure}

\paragraph{Description of the Algorithm.} Step~1 removes vertices and arcs that belong to no directed
cycle. Step~2 accounts for the marked faces: any solution must
use at least $\lceil m/2\rceil$ bimodal vertices to hit them,
leaving room for at most $h$ nonbimodal vertices. Step~3
finishes the instance by enumerating these nonbimodal vertices
when $h=0$ or $q<3m$, and using $\operatorname{Complete}$ to
find the remaining bimodal part. Otherwise, Step~4 finds an
unmarked circular face with at most four deletable nonbimodal
vertices. Steps~5 and~6 apply the branching rule described above.

\paragraph{Why preprocessing?} We first explain why preprocessing is needed. It preserves
all directed cycles and leaves the budget unchanged, but
can expose circular faces on which to branch. For example,
consider a circular triangle with one pendant arc drawn
inside it and another drawn outside it; see
Figure~\ref{fig:det-preprocessing}.
It is not circular, since each traverses a pendant arc in both
directions.
Removing these arcs and their pendant vertices makes the face circular.
Preprocessing is repeated at every recursive call because vertex deletions may cause further vertices and
arcs to belong to no directed cycle.
The key to the algorithm is the existence of the face required in Step~4. The following lemma guarantees such a face whenever
$q\ge3m$.

\begin{lemma}\label{lem:small-unmarked-face}
Let $D$ be a strongly connected plane digraph with at least
two vertices, let $B$ be its set of deletable nonbimodal
vertices, and let $\mathcal M$ be a set of marked circular
faces. Write $q:=|B|$ and $m:=|\mathcal M|$.
If $q\ge3m$, then some unmarked circular face has at most
four distinct vertices of $B$ on its boundary.
\end{lemma}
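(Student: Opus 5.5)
The plan is a counting argument by contradiction. Suppose every unmarked circular face has at least five distinct vertices of $B$ on its boundary. I will compare a lower bound on the number of unmarked circular faces, coming from Lemma~\ref{lem-diface}, with an upper bound coming from the edge count of a planar bipartite graph.

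\textbf{Lower bound on circular faces.} Since $D$ is strongly connected with at least two vertices, it is weakly connected and has an arc. Every vertex has an incoming and an outgoing arc, so $b(v)\ge1$ and $w(v)\ge0$ for all $v$. Every $v\in B$ has $b_D(v)>1$, hence $w(v)\ge1$, and therefore $L\ge w(B)\ge q$. The faces with $a(f)\ge1$ contribute nonnegative terms to \eqref{eq:directed-face-formula}, so Lemma~\ref{lem-diface} gives
\[
\ell_{\rm dir}(D)\ge L+2\ge q+2 .
\]
Writing $\ell':=\ell_{\rm dir}(D)-m$ for the number of unmarked circular faces, we get $\ell'\ge q+2-m$. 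Under the hypothesis $q\ge3m$ this is at least $2$, so unmarked circular faces exist.

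\textbf{Upper bound via an incidence graph.} Form the bipartite graph $H$ with vertex set $B$ together with the unmarked circular faces. Put one edge between $v\in B$ and a face $f$ whenever $v$ lies on the boundary of $f$, keeping a single edge per distinct incidence, so $H$ is simple. $H$ is planar: place a point inside each face $f$, and for each distinct $v$ on its boundary draw a curve inside $f$ to one chosen corner of $v$ lying in $f$. Curves inside a single face can be drawn pairwise non-crossing, because they go from an interior point to distinct boundary corners. Curves in different faces are disjoint. Contracting or ignoring the arcs of $D$ then leaves a plane drawing of $H$, since each $v\in B$ is itself a point.

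Under the contrary assumption each face node has degree at least $5$, so $|E(H)|\ge5\ell'$ and $H$ has at least $6$ vertices. A simple planar bipartite graph with at least $3$ vertices has at most $2|V(H)|-4$ edges. Hence
\[
5\ell'\le 2(q+\ell')-4, \qquad\text{i.e.}\qquad 3\ell'\le 2q-4 .
\]

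\textbf{Contradiction.} Combining the two bounds gives $3(q+2-m)\le 2q-4$, that is, $q\le 3m-10<3m$, which contradicts $q\ge3m$. So some unmarked circular face has at most four distinct vertices of $B$ on its boundary.

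The main obstacle I expect is making the planarity of $H$ rigorous when a vertex occurs several times on a facial walk, or when $D$ has parallel arcs. I would handle this by connecting to one corner per distinct vertex as described above, and by observing that the faces of the plane digraph $D$ are open discs or punctured regions in which the star drawings are disjoint. The remaining steps are routine once $L\ge q$ is justified, which relies on $D$ being strongly connected so that all weights are nonnegative.
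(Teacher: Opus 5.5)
Your proof is correct and follows the paper's argument essentially step for step. Both use Lemma~\ref{lem-diface} together with $b(v)\ge1$ (and $b(v)\ge2$ on $B$) to get $t\ge q+2-m\ge2$ unmarked circular faces, form the simple planar bipartite incidence graph between $B$ and these faces, apply $5t\le 2(q+t)-4$, and reach the contradiction $q\le 3m-10$. Your justification of the planarity of $H$, using one corner per distinct vertex, is a somewhat more careful version of the paper's one-line argument.
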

\begin{proof}
Let $\mathcal F$ be the set of unmarked circular faces and
write $t:=|\mathcal F|$. By
Lemma~\ref{lem-diface},
\[
    t+m=\ell_{\rm cir}(D)
    \ge 2+\sum_{v\in V(D)}(b_D(v)-1)
    \ge q+2.
\]
The last inequality holds because strong connectivity gives
$b_D(v)\ge1$ for every vertex, while each vertex of $B$
satisfies $b_D(v)\ge2$. Thus $ t\ge q+2-m\ge2$,
where the second inequality follows from $q\ge3m$.

Construct the simple bipartite incidence graph $H$ with
parts $B$ and $\mathcal F$: a vertex $v\in B$ is adjacent
to $f\in\mathcal F$ if it lies on the boundary of $f$.
The graph $H$ is planar, since we may place each face vertex
inside its face and draw its incident edges within that face,
retaining one edge for each distinct vertex--face incidence.

Suppose every face in $\mathcal F$ contains at least five
distinct vertices of $B$. Then $q\ge5$, and counting edges
at the vertices of $\mathcal F$ gives
$5t\le |E(H)|\le2(q+t)-4$,  using the edge bound for simple bipartite planar graphs. Consequently, $3t\le2q-4$. Together with the lower bound
on $t$, this implies   $3(q+2-m)\le2q-4$, 
and hence $q\le3m-10$, contradicting $q\ge3m$.
Therefore, some unmarked circular face contains at most
four distinct vertices of $B$ on its boundary.
\end{proof}

To track progress, we associate with each recursive instance
$I=(D,j,P,\mathcal M)$ the measure
\[
    \mu(I):=j-\alpha|\mathcal M|,
\]
where $0<\alpha<1/2$ will be chosen in the running-time
analysis. The following lemma shows that preprocessing
preserves the measure and every recursive branch decreases it.

\begin{lemma}\label{lem:measure-progress}
Fix $0<\alpha<1/2$, and define
$\mu(I):=j-\alpha|\mathcal M|$ for an instance
$I=(D,j,P,\mathcal M)$.
Every instance that passes Steps~1 and~2 has nonnegative
measure. For every instance $I$ that reaches the branching
steps, after preprocessing:
\begin{enumerate}[label=(\roman*)]
    \item each child created in Step~5 has measure
    $\mu(I)-1$;
    \item the child created in Step~6 has measure
    $\mu(I)-\alpha$.
\end{enumerate}
\end{lemma}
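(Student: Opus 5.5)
Nonnegativity comes first. Step~2 rejects any instance with $m>2j$, so every instance that survives it has $m=|\mathcal M|\le 2j$. Then
\[
\mu(I)=j-\alpha m\ge j-\tfrac12 m\ge 0,
\]
using $\alpha<1/2$ and $m\ge0$. Step~1 also rejects $j<0$, so the only instances that pass have $j\ge0$ and $m\le 2j$. I will also note that preprocessing in Step~1 changes only $D$ and $P$, not $j$ or $\mathcal M$, so it leaves the measure unchanged. One subtlety needs a remark. If preprocessing deletes vertices or arcs on the boundary of a marked face, the ``faces'' in $\mathcal M$ may no longer be literal faces of the new $D$. Since $\mu$ depends only on the cardinality $|\mathcal M|$, I treat $\mathcal M$ as a set of recorded labels, and preprocessing does not change its size. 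So the measure is computed after preprocessing with the same $j$ and $m$.

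For (i), consider a child from Step~5. It is $(D-v_i,\,j-1,\,P\cup\{v_1,\dots,v_{i-1}\},\,\mathcal M)$. Its budget is $j-1$ and its marked set is the same $\mathcal M$, so its measure is $(j-1)-\alpha|\mathcal M|=\mu(I)-1$. The measure is evaluated on the tuple as passed to the child. When the child later runs its own preprocessing, the measure stays the same, by the remark above.

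For (ii), consider the child from Step~6. It is $(D,\,j,\,P\cup\{v_1,\dots,v_t\},\,\mathcal M\cup\{f\})$, and I must show $|\mathcal M\cup\{f\}|=|\mathcal M|+1$. Step~4 picks $f$ to be an unmarked circular face, so $f\notin\mathcal M$ and the marked set grows by exactly one. Hence the measure is $j-\alpha(|\mathcal M|+1)=\mu(I)-\alpha$. The main (small) obstacle is the bookkeeping just described: making sure ``unmarked'' really means $f\notin\mathcal M$ as sets after earlier preprocessing. I would handle it by identifying marked faces with fixed labels that are preserved by the inherited embedding.
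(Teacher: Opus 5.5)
Your proof is correct, but it takes a more minimal route than the paper. You treat $\mathcal M$ as a stored record. Step~1 modifies only $D$ and $P$, and Step~4 picks $f$ unmarked, i.e.\ $f\notin\mathcal M$. So (i), (ii) and the invariance of $\mu$ under preprocessing reduce to bookkeeping on $j$ and $|\mathcal M|$. Your nonnegativity bound is the same as the paper's, $\mu\ge(1-2\alpha)j\ge 0$.

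The paper instead first proves a structural invariant: every marked face remains a circular face of the current digraph, distinct from the other marked faces, with no deletable nonbimodal vertex on its boundary. The ingredients are:
\begin{itemize}
\item at marking time, all nonbimodal boundary vertices of $f$ become undeletable;
\item undeletability is permanent, and bimodality survives deletions, since removing arcs around a vertex cannot increase the number of in/out switches;
\item hence Step~5 never deletes a vertex on a marked boundary;
\item the boundary arcs of a circular face form a closed directed walk, so they lie on directed cycles and preprocessing keeps them.
\end{itemize}
From this the paper reads off that preprocessing does not change $|\mathcal M|$ and that Step~6 adds a genuinely new face.

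Your version is shorter and needs no geometry. The paper's version guarantees that the entries of $\mathcal M$ really are distinct circular faces of the current $D$. That fact is used elsewhere: in the rejection bound $m\le 2|X\setminus B|$ of Step~2, and in the count $t+m=\ell_{\rm cir}(D)$ of Lemma~\ref{lem:small-unmarked-face}. Your phrase ``labels preserved by the inherited embedding'' quietly presupposes this invariant. The lemma as stated does not need it, but if you adopt the label view you still have to prove it somewhere for the algorithm to be correct.
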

\begin{proof}
We first show that marked faces remain circular and have no
deletable nonbimodal boundary vertices. Initially there are
no marked faces. Whenever a face is marked, its deletable
nonbimodal boundary vertices become undeletable. Its remaining
deletable boundary vertices are bimodal and stay bimodal
under deletions. Thus Step~5, which deletes only deletable
nonbimodal vertices, never touches a marked boundary.
Preprocessing also preserves these boundaries, as shown above.
Consequently, existing marked faces survive as distinct
circular faces throughout the recursion.

It follows that preprocessing leaves $\mu$ unchanged.
Step~5 decreases $j$ by one and leaves $|\mathcal M|$
unchanged, so it decreases $\mu$ by one. Step~6 leaves $j$
unchanged and adds one new marked face, so it decreases
$\mu$ by $\alpha$.

Finally, every instance that passes Steps~1 and~2 satisfies
$j\ge0$ and $|\mathcal M|\le2j$. Hence
\[
    \mu(I)=j-\alpha|\mathcal M|
    \ge(1-2\alpha)j\ge0,
\]
which completes the proof.
\end{proof}

\begin{theorem}\label{thm:deterministic-8030}
There is a deterministic algorithm for \textsc{Planar Directed Feedback Vertex Set}
with running time $8.030^k n^{\OO(1)}$ and polynomial space.
\end{theorem}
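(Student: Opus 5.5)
The plan is to prove correctness and the running-time bound separately. The running time comes from a measure-and-conquer analysis with the measure $\mu(I)=j-\alpha|\mathcal M|$ of Lemma~\ref{lem:measure-progress}. Throughout I take as given the facts established in the proof of Lemma~\ref{lem:measure-progress}: marked faces stay circular, stay distinct, and keep no deletable nonbimodal boundary vertex.

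\textbf{Correctness.} Soundness is immediate, since every returned set is a union of deleted vertices and a DFVS produced by $\operatorname{Complete}$ on the remaining graph. For completeness, fix a target solution $X$ of size at most $j$ that avoids $P$. I argue by induction on the recursion that $X$ survives in some branch.

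\begin{itemize}
\item \emph{Preprocessing.} This preserves every directed cycle, so it preserves $X$.
\item \emph{Marked faces force bimodal vertices.} Each marked face $f$ is circular, so its boundary is a directed closed walk and $X$ must meet it. All nonbimodal vertices on the boundary of $f$ are undeletable, so $X$ meets $f$ at a bimodal vertex.
\item \emph{Each bimodal vertex hits at most two marked faces.} A bimodal vertex $v$ has $b(v)=1$, hence exactly two alternating corners. A circular face at $v$ uses an alternating corner at $v$, so $v$ lies on at most two circular faces.
\item \emph{Consequence.} $X$ contains at least $\lceil m/2\rceil$ bimodal vertices. Therefore $m>2j$ correctly rejects, and $|X\cap B|\le h$.
\item \emph{Step~3.} The guess $Y=X\cap B$ is among the enumerated sets. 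The set $P\cup B$ contains all positive vertices of $D-Y$, because deleting vertices never increases $b$ at the remaining vertices. So Lemma~\ref{lem:protected} applies and finds a DFVS of size at most $j-|Y|$ inside $U$.
\item \emph{Steps~5--6.} These preserve $X$ by the first-index argument given before the procedure.
\item \emph{Step~4 always succeeds.} Preprocessing leaves a disjoint union of strongly connected components with at least two vertices each, since the graph is loopless. The quantities $q$ and $m$ split as sums over components, and every marked face is a face of a single component. If $q\ge 3m$, some component satisfies $q_i\ge 3m_i$. Lemma~\ref{lem:small-unmarked-face} applied to that component, with its inherited embedding, then yields an unmarked circular face with at most four vertices of $B$.
\end{itemize}

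One detail needs checking: a face that is circular in a component must correspond to a genuine directed cycle of $D$. This holds because that component is a subgraph of $D$.

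\textbf{Running time.} I will show that the number of leaves, weighted by their enumeration cost, is at most $c^{\mu(I)}$ for $c=8.030$ and a suitable $\alpha\in(0,1/2)$. The proof is by induction on $\mu$, which decreases by at least $\alpha$ in every branch and is nonnegative by Lemma~\ref{lem:measure-progress}.

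For the branching steps, the four-plus-one branches give the inductive step provided
\[
4c^{-1}+c^{-\alpha}\le 1 .
\]
For $c=8.03$ this forces roughly $c^{-\alpha}\le 0.502$, i.e.\ $\alpha\ge\log(1/0.502)/\log 8.03\approx 0.33$.

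For the leaves, the cost is $\sum_{i\le h}\binom{q}{i}$ with $q<3m$ and $h=j-\lceil m/2\rceil$; the case $h=0$ costs only one call. Writing $j\ge h+m/2$, it suffices to show
\[
\sum_{i\le h}\binom{3m}{i}\le c^{\,h+(1/2-\alpha)m}\qquad\text{for all } h,m\ge 0.
\]
I would prove this by the generating-function (binomial-theorem) bound used for $\varphi$ in Theorem~\ref{thm:main}: for any $x\in(0,1]$,
\[
\sum_{i\le h}\binom{3m}{i}\le x^{-h}(1+x)^{3m}.
\]
Taking $x=1/c$ gives $c^h(1+1/c)^{3m}$, so it suffices that
\[
(1+1/c)^3\le c^{1/2-\alpha}.
\]

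\textbf{Main obstacle.} The delicate step is choosing $\alpha$ so that both inequalities hold simultaneously at $c=8.030$. I expect this to pin down the constant, with $\alpha\approx 0.33$, and I will verify it numerically. If the crude choice $x=1/c$ is too lossy, I will instead optimise $x$ as a function of the ratio $h/m$ via the entropy bound, which is still the same binomial estimate.

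Each node of the recursion runs in polynomial time, and the depth is polynomial because $\mu$ drops by at least $\alpha$ per level. Branches are processed sequentially, so the space stays polynomial. This gives the total bound $8.030^k n^{\OO(1)}$.
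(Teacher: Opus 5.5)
Your proposal is correct and follows the paper's proof essentially step for step. It uses the same completeness argument: marked faces force $\lceil m/2\rceil$ bimodal solution vertices because a bimodal vertex lies on at most two circular faces, Step~3 works through the guess $Y=X\cap B$, and Step~4 works through a component with $q_i\ge 3m_i$. It also uses the same measure $\mu=j-\alpha|\mathcal M|$, with the two constraints $4/c+c^{-\alpha}\le 1$ and $(1+1/c)^3/\sqrt c\le c^{-\alpha}$ obtained from the choice $x=1/c$.

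The numerical check you defer does go through, with very little slack. At $c=8.030$ the admissible $\alpha$ lie in roughly $[0.33094,\,0.33098]$. The paper takes the left endpoint $\alpha=-\log_c(1-4/c)$, so no entropy refinement is needed.
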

\begin{proof}
We run $\operatorname{DetSolve}(D,k,\emptyset,\emptyset)$.
We prove that every reachable instance $I=(D,j,P,\mathcal M)$
returns a DFVS of size at most $j$ disjoint from $P$ if one
exists, and returns $\FAIL$ otherwise. Calls terminating in
Steps~1 and~2 are justified directly below. For the remaining
calls, we induct on $\lfloor\mu(I)/\alpha\rfloor$.
By Lemma~\ref{lem:measure-progress}, this is a nonnegative
integer and strictly decreases in every recursive child
that passes these steps.

Step~1 is correct because preprocessing preserves all
directed cycles. For Step~2, every feasible solution $X$
must hit each marked face at a bimodal vertex, and each
bimodal vertex belongs to at most two circular faces. Hence
\[
    m\le2|X\setminus B|\le2j,
    \qquad
    |X\cap B|\le j-\left\lceil\frac m2\right\rceil=h.
\]
Thus the rejection in Step~2 is correct.

If Step~3 applies, any feasible solution $X$ is found through
the guess $Y=X\cap B$: the set $X\setminus Y$ is contained
in $U$ and is a DFVS of $D-Y$ of size at most $j-|Y|$.
Since vertices of $U$ remain bimodal after deleting $Y$,
Lemma~\ref{lem:protected} guarantees that the completion
call succeeds. Conversely, any returned set $Y\cup Z$
is a feasible solution.

Otherwise, $q\ge3m$, and
Lemma~\ref{lem:small-unmarked-face} supplies the face
required in Step~4. With several strongly connected
components, apply the lemma to one satisfying
$q_i\ge3m_i$; such a component exists by summing these
counts.

Now assume the claim holds for all smaller induction
parameters. If a feasible solution $X$ contains a vertex
of $\{v_1,\ldots,v_t\}$, let $v_i$ be the first such vertex.
Then $X\setminus\{v_i\}$ is feasible for the corresponding
child in Step~5. If $X$ contains none of these vertices,
it is feasible for the child in Step~6. By the induction
hypothesis, the corresponding recursive call succeeds.
Conversely, every successful child returns a feasible
solution to the parent, adding $v_i$ in Step~5.
This proves the inductive claim and, applied to
$(D,k,\emptyset,\emptyset)$, establishes correctness.

We now bound the running time using the measure
$\mu=j-\alpha m$. By Lemma~\ref{lem:measure-progress},
a branching instance creates at most four children of
measure $\mu-1$ and one child of measure $\mu-\alpha$.
Thus the branching recurrence is
\[
    T(\mu)\le
    4T(\mu-1)+T(\mu-\alpha)+n^{\OO(1)}.
\]
To obtain a bound with exponential factor $c^\mu$, we require
\[
    \frac4c+c^{-\alpha}\le1.
\]

We must also bound the cost of terminal instances.
If $h=0$, Step~3 takes polynomial time. Otherwise, $q<3m$,
and the number of guesses satisfies
\[
\begin{aligned}
    \sum_{i=0}^{\min\{h,q\}}\binom qi
    &\le c^h\left(1+\frac1c\right)^q\\
    &\le c^j
       \left(\frac{(1+1/c)^3}{\sqrt c}\right)^m.
\end{aligned}
\]
Hence this cost is $c^\mu n^{\OO(1)}$ provided that
\[
    \frac{(1+1/c)^3}{\sqrt c}\le c^{-\alpha}.
\]
We therefore choose
\[
    c:=8.030,
    \qquad
    \alpha:=-\log_c\left(1-\frac4c\right).
\]
These values satisfy $0<\alpha<1/2$ and
\[
    \frac4c+c^{-\alpha}=1,
    \qquad
    \frac{(1+1/c)^3}{\sqrt c}
    \le c^{-\alpha}.
\]

The recurrence, together with the terminal cost
$c^\mu n^{\OO(1)}$, gives
\[
    T(\mu)=(\mu+1)c^\mu n^{\OO(1)}
    \qquad\text{for }\mu\ge0.
\]
Indeed, this follows by induction: the exponential factors
of the children sum to at most $c^\mu$, while each child's
measure decreases by at least $\alpha$, absorbing the
polynomial work at the parent. Calls rejected in Steps~1
and~2 take only polynomial time.

Initially $\mu=k$, so, assuming $k\le n$, the running time is
$8.030^k n^{\OO(1)}$. The recursion has depth $\OO(k)$;
exploring it depth first and enumerating guesses one at
a time uses polynomial space.
\end{proof}

\section{Conclusion and open problems}
\label{sec:conclusion}

We gave the first single-exponential fixed-parameter
algorithms for \textsc{Planar DFVS}: a randomized
algorithm running in time
$(2+\sqrt5)^k n^{\OO(1)}$ and a deterministic
algorithm running in time
$8.030^k n^{\OO(1)}$, both using polynomial
space. Both algorithms rely on an exact Euler-type
identity, which implies that every feedback vertex
set of size at most $k$ has weight at least
$L/2+r-k$, together with a polynomial-time algorithm
for finding a minimum feedback vertex set contained
in the deletable bimodal vertices.

Two open problems remain. First, does
\textsc{Planar DFVS} admit a polynomial kernel
that does not increase the solution budget?
More precisely, can an instance $(D,k)$ be reduced
in polynomial time to an equivalent planar instance
$(D',k')$ of size $k^{\OO(1)}$ with $k'\le k$?
Second, does \textsc{DFVS} on general digraphs admit
an algorithm running in time $c^k n^{\OO(1)}$
for some constant $c$?

\paragraph{Draft state}
This is an unfinished and unpolished manuscript. Please do not distribute, and do not feed to AI tools. 

\paragraph{AI Disclosure}
The authors used Prism (AI Latex Editor, Claude, OpenAI) to generate the current draft based on scaffolding provided by the authors. All proofs are due to the authors without the use of AI tools.

\bibliographystyle{alpha}
\bibliography{references_v4}
\end{document}